\newtheorem{thm-main}{Theorem}
\newtheorem{theorem}{Theorem}[section]
\newtheorem{lemma}[theorem]{Lemma}
\newtheorem{corollary}[theorem]{Corollary}
\newtheorem{remark}[theorem]{Remark}
\newtheorem{proposition}[theorem]{Proposition}
\newtheorem{observation}[theorem]{Observation}
\theoremstyle{definition}
\newtheorem{definition}[theorem]{Definition}
\newtheorem{reductionrule}[theorem]{Reduction Rule}
\newtheorem{openproblem}{Open Problem}
\newtheorem{nestedclaim}{Claim}[theorem]
\newtheorem{nestedobservation}[nestedclaim]{Observation}
\newcommand{\bprob}{$b$-\textsc{Chromatic Number}}
\newcommand{\bcol}{$b$-\textsc{Coloring}}
\newenvironment{claimproof}{\begin{proof}\renewcommand{\qedsymbol}{\claimqed}}{\end{proof}\renewcommand{\qedsymbol}{\plainqed}}
\let\plainqed\qedsymbol
\renewcommand{\paragraph}{%
  \@startsection{paragraph}{4}%
  {\z@}{1ex \@plus 1ex \@minus .2ex}{-1em}
  {\normalfont\normalsize\bfseries}%
}
\newlength{\RoundedBoxWidth}
\newsavebox{\GrayRoundedBox}
\newenvironment{GrayBox}[1]%
   {\setlength{\RoundedBoxWidth}{.93\textwidth}
    \def\boxheading{#1}
    \begin{lrbox}{\GrayRoundedBox}
       \begin{minipage}{\RoundedBoxWidth}}%
   {   \end{minipage}
    \end{lrbox}
    \begin{center}
    \begin{tikzpicture}%
       \node(Text)[draw=black!20,fill=white,rounded corners,%
             inner sep=2ex,text width=\RoundedBoxWidth]%
             {\usebox{\GrayRoundedBox}};
        \coordinate(x) at (current bounding box.north west);
        \node [draw=white,rectangle,inner sep=3pt,anchor=north west,fill=white] 
        at ($(x)+(6pt,.75em)$) {\boxheading};
    \end{tikzpicture}
    \end{center}}     
\newenvironment{defproblemx}[2][]{\noindent\ignorespaces%
                                \FrameSep=6pt%
                                \parindent=0pt%
                \vspace*{-1.5em}
                \ifthenelse{\isempty{#1}}{%
                  \begin{GrayBox}{\textsc{#2}}%
                }{%
                  \begin{GrayBox}{\textsc{#2} parameterized by~{#1}}%
                }
                \newcommand\Input{Input:}%
                \begin{tabular*}{\textwidth}{@{\hspace{.1em}} >{\itshape} p{1.8cm} p{0.8\textwidth} @{}}%
            }{
                \end{tabular*}%
                \end{GrayBox}%
                \ignorespacesafterend
            }
\newcommand{\defproblema}[3]{
  \begin{defproblemx}{#1}
    Input:  & #2 \\
    Question: & #3
  \end{defproblemx}
}%
\newcommand\problemdef[3]{
	\defproblema{#1}{#2}{#3}
}
\newcommand\bN{\mathbb{N}}
\newcommand\cO{\mathcal{O}}
\newcommand*{\defeq}{\mathrel{\vcenter{\baselineskip0.5ex \lineskiplimit0pt
                     \hbox{\scriptsize.}\hbox{\scriptsize.}}}%
                     =}
\newcommand\yes{\textsc{Yes}}
\newcommand\no{\textsc{No}}
\newcommand\dist{dist}
\newcommand\coloring{\gamma}
\newcommand\coloringg{\delta}
\newcommand\card[1]{\left|#1\right|}
\newcommand\cC{\mathcal{C}}
\title{A Complexity Dichotomy for Critical Values of the $b$-Chromatic Number of Graphs}
\author[1]{Lars Jaffke\thanks{Supported by the Bergen Research Foundation (BFS).}} 
\author[1]{Paloma T.\ Lima\thanks{Supported by Research Council of Norway via the project ``CLASSIS''.}}
\affil[1]{Department of Informatics\\
	University of Bergen\\
	Bergen, Norway}
\affil[ ]	{\texttt{\{lars.jaffke,paloma.lima\}@uib.no}}
\date{}
\begin{document}

\maketitle

\begin{abstract}
	A \emph{$b$-coloring} of a graph $G$ is a proper coloring of its vertices such that each color class contains a vertex that has at least one neighbor in all the other color classes. The \bcol{} problem asks whether a graph $G$ has a $b$-coloring with $k$ colors.
	The \emph{$b$-chromatic number} of a graph $G$, denoted by $\chi_b(G)$, is the maximum number $k$ such that $G$ admits a $b$-coloring with $k$ colors. We consider the complexity of the \bcol{} problem, whenever the value of $k$ is close to one of two upper bounds on $\chi_b(G)$: The maximum degree $\Delta(G)$ plus one, and the $m$-degree, denoted by $m(G)$, which is defined as the maximum number $i$ such that $G$ has $i$ vertices of degree at least $i-1$. We obtain a dichotomy result stating that for fixed $k \in \{\Delta(G) + 1 - p, m(G) - p\}$, the problem is polynomial-time solvable whenever $p \in \{0, 1\}$ and, even when $k = 3$, it is \NP{}-complete whenever $p \ge 2$.
	We furthermore consider parameterizations of the \bcol{} problem that involve the maximum degree $\Delta(G)$ of the input graph $G$ and give two \FPT{}-algorithms. First, we show that deciding whether a graph $G$ has a $b$-coloring with $m(G)$ colors is \FPT{} parameterized by $\Delta(G)$. Second, we show that \bcol{} is \FPT{} parameterized by $\Delta(G) + \ell_k(G)$, where $\ell_k(G)$ denotes the number of vertices of degree at least $k$.
\end{abstract}

\section{Introduction}
Given a set of colors, a \emph{proper coloring} of a graph is an assignment of a color to each of its vertices in such a way that no pair of adjacent vertices receive the same color.
In the deeply studied \textsc{Graph Coloring} problem, we are given a graph and the question is to determine the smallest set of colors with which we can properly color the input graph.
This problem is among Karp's famous list of 21 \NP{}-complete problems~\cite{Kar72} and since it often arises in practice, heuristics to solve it are deployed in a wide range of applications. A very natural such heuristic is the following. We greedily find a proper coloring of the graph, and then try to \emph{suppress} any of its colors in the following way: say we want to suppress color~$c$. If there is a vertex $v$ that has received color $c$, and there is another color $c' \neq c$ that does not appear in the neighborhood of $v$, then we can safely recolor the vertex~$v$ with color~$c'$ without making the coloring improper. We terminate this process once we cannot suppress any color anymore.

To predict the worst-case behavior of the above heuristic, Irving and Manlove defined the notions of a \emph{$b$-coloring} and the \emph{$b$-chromatic number} of a graph~\cite{IM99}. A $b$-coloring of a graph $G$ is a proper coloring such that in every color class there is a vertex that has a neighbor in all of the remaining color classes, and the $b$-chromatic number of $G$, denoted by $\chi_b(G)$, is the maximum integer $k$ such that $G$ admits a $b$-coloring with $k$ colors. We observe that in a $b$-coloring with $k$ colors, there is no color that can be suppressed to obtain a proper coloring with $k-1$ colors, hence $\chi_b(G)$ describes the worst-case behavior of the previously described heuristic on the graph $G$.
We consider the following two computational problems associated with $b$-colorings of graphs.

\problemdef
	{$b$-Coloring}
	{Graph $G$, integer $k$}
	{Does $G$ admit a $b$-coloring with $k$ colors?}
\problemdef
	{$b$-Chromatic Number}
	{Graph $G$, integer $k$}
	{Is $\chi_b(G) \ge k$?}
We would like to point out an important distinction from the `standard' notion of proper colorings of graphs: If a graph $G$ has a $b$-coloring with $k$ colors, then this implies that $\chi_b(G) \ge k$. However, if $\chi_b(G) \ge k$ then we can in general not conclude that $G$ has a $b$-coloring with $k$ colors. A graph for which the latter implication holds as well is called \emph{$b$-continuous}. This notion is mostly of structural interest, since the problem of determining if a graph is $b$-continous is \NP-complete even if an optimal proper coloring and a $b$-coloring with $\chi_b(G)$ colors are given~\cite{BCF07}.

Besides observing that $\chi_b(G) \le \Delta(G) + 1$ where $\Delta(G)$ denotes the maximum degree of $G$, Irving and Manlove~\cite{IM99} defined the \emph{$m$-degree} of $G$ as the largest integer $i$ such that $G$ has $i$ vertices of degree at least $i - 1$. It follows that $\chi_b(G) \le m(G)$. Since the definition of the $b$-chromatic number originated in the analysis of the worst-case behavior of graph coloring heuristics, graphs whose $b$-chromatic numbers take on \emph{critical} values, i.e.\ values that are close to these upper bounds, are of special interest. In particular, identifying them can be helpful in structural investigations concerning the performance of graph coloring heuristics.

In terms of computational complexity, Irving and Manlove showed that both \bcol{} and \bprob{} are \NP{}-complete~\cite{IM99} and Sampaio observed that \bcol{} is \NP{}-complete even for every fixed integer $k$~\cite{Sam12}. Panolan et al.~\cite{PPS17} gave an exact exponential algorithm for \bprob{} running in time $\cO(3^nn^4\mbox{ log }n)$ and an algorithm that solves \bcol{} in time $\cO(\binom{n}{k} 2^{n-k} n^4\log n)$.
From the perspective of parameterized complexity~\cite{CyganEtAl15,DF13}, it has been shown that \bprob{} is $\W[1]$-hard parameterized by~$k$~\cite{PPS17} and that the dual problem of deciding whether $\chi_b(G) \ge n - k$, where $n$ denotes the number of vertices in~$G$, is \FPT{} parameterized by $k$~\cite{HS13}.

Since the above mentioned upper bounds $\Delta(G) + 1$ and $m(G)$ on the $b$-chromatic number are trivial to compute, it is natural to ask whether there exist efficient algorithms that decide whether $\chi_b(G) = \Delta(G) + 1$ or $\chi_b(G) = m(G)$. It turns out both these problems are \NP{}-complete as well~\cite{HSS12,IM99}. However, it is known that the problem of deciding whether a graph $G$ admits a $b$-coloring with $k = \Delta(G) + 1$ colors is \FPT{} parameterized by $k$~\cite{PPS17,Sam12}.

\paragraph*{The Dichotomy Result.} One of the main contributions of this paper is a complexity dichotomy of the \bcol{} problem for fixed $k$, whenever $k$ is close to either $\Delta(G) + 1$ or $m(G)$. In particular, for fixed $k \in \{\Delta(G) + 1 - p, m(G) - p\}$, we show that the problem is polynomial-time solvable when $p \in \{0, 1\}$ and, even in the case $k = 3$, \NP{}-complete for all fixed $p \ge 2$. More specifically, we give \XP{} time algorithms for the cases $k = m(G)$, $k = \Delta(G)$, and $k = m(G) - 1$ which together with the \FPT{} algorithm for the case $k = \Delta(G) + 1$~\cite{PPS17,Sam12} and the aforementioned \NP{}-hardness result for $k = 3$ completes the picture. 
We now formally state this result.
\begin{thm-main}\label{thm:main}
	Let $G$ be a graph, $p \in \bN$ and $k \in \{\Delta(G) + 1 - p, m(G) - p\}$. The problem of deciding whether $G$ has a $b$-coloring with $k$ colors is
	\begin{enumerate}[(i)]
		\item \NP{}-complete if $k$ is part of the input and $p \in \{0, 1\}$,\label{thm:main:np-c:01}
		\item \NP{}-complete if $k = 3$ and $p \ge 2$, and\label{thm:main:np-c:2+}
		\item polynomial-time solvable for any fixed positive $k$ and $p \in \{0, 1\}$.\label{thm:main:poly}
	\end{enumerate}
\end{thm-main}

\paragraph*{Maximum Degree Parameterizations.} We consider parameterizations of \bcol{} that involve the maximum degree of the input graph and provide two new results. 
First, we show that deciding whether a graph admits a $b$-coloring with $m(G)$ colors is \FPT{} parameterized by $\Delta(G)$.
Second, as an extension of the first result, we prove that \bcol{} is \FPT{} parameterized by $\Delta(G) + \ell_k(G)$, where $\ell_k(G)$ denotes the number of vertices of degree at least $k$.
From the result of Kratochv\'{i}l et al.~\cite{KTV02}, stating that \bcol{} is \NP{}-complete for $k = \Delta(G) + 1$, it follows that \bcol{} is \NP{}-complete when $\Delta(G)$ is unbounded and $\ell_k(G) = 0$. On the other hand, Theorem~\ref{thm:main}(\ref{thm:main:np-c:2+}) implies that \bcol{} is already \NP{}-complete when $k = 3$ and $\Delta(G) = 4$. Together, this rules out the possibility of \FPT{}- and even of \XP{}-algorithms for parameterizations by one of the two parameters alone, unless $\P = \NP$.
Note that parameterizations of graph coloring problems by the number of high degree vertices have previously been considered for vertex coloring~\cite{Marx17} and edge coloring~\cite{GLPR19}.

\paragraph*{Organization.} The rest of the paper is organized as follows. After giving preliminary definitions in Section~\ref{sec:preliminaries}, we present the hardness results in Section~\ref{sec:hardness}, the algorithmic results of the dichotomy in Section~\ref{sec:alg:dich}, and the algorithms for the maximum degree parameterizations in Section~\ref{sec:alg:max:deg}. We conclude in Section~\ref{sec:conclusion}.

\section{Preliminaries}\label{sec:preliminaries}
We use the following notation: For $k \in \bN$, $[k] \defeq \{1,\ldots, k\}$. 
For a function $f \colon X \to Y$ and $X' \subseteq X$, we denote by~$f|_{X'}$ the restriction of $f$ to $X'$ and by $f(X')$ the set $\{f(x)\mid x\in X'\}$. For a set $X$ and an integer $n$, we denote by $\binom{X}{n}$ the set of all size-$n$ subsets of $X$.

\medskip
\noindent\textbf{Graphs.} Throughout the paper a graph $G$ with vertex set $V(G)$ and edge set $E(G) \subseteq \binom{V(G)}{2}$ is finite and simple. We often denote an edge $\{u, v\} \in E(G)$ by the shorthand $uv$.
For graphs $G$ and $H$ we denote by $H \subseteq G$ that $H$ is a subgraph of $G$, i.e.\ $V(H) \subseteq V(G)$ and $E(H) \subseteq E(G)$. We often use the notation $n \defeq |V(G)|$. For a vertex $v \in V(G)$, we denote by $N_G(v)$ the \emph{open neighborhood} of $v$ in $G$, i.e.\ $N_G(v) = \{w \in V(G) \mid vw \in E(G)\}$, and by $N_G[v]$ the \emph{closed neighborhood} of $v$ in $G$, i.e.~$N_G[v] \defeq \{v\} \cup N_G(v)$. For a set of vertices $X \subseteq V(G)$, we let $N_G(X) \defeq \bigcup_{v \in X} N_G(v) \setminus X$ and $N_G[X] \defeq X \cup N_G(X)$.
When $G$ is clear from the context, we abbreviate `$N_G$' to `$N$'. The \emph{degree} of a vertex $v \in V(G)$ is the size of its open neighborhood, and we denote it by $\deg_G(v) \defeq \card{N_G(v)}$ or simply by $\deg(v)$ if $G$ is clear from the context. For an integer $k$, we denote by $\ell_k(G)$ the number of vertices of degree at least $k$ in $G$.

For a vertex set $X \subseteq V(G)$, we denote by $G[X]$ the subgraph \emph{induced} by $X$, i.e.~$G[X] \defeq (X, E(G) \cap \binom{X}{2})$. We furthermore let $G - X \defeq G[V(G) \setminus X]$ be the subgraph of $G$ obtained from removing the vertices in $X$ and for a single vertex $x \in V(G)$, we use the shorthand `$G - x$' for `$G - \{x\}$'.

A graph $G$ is said to be \emph{connected} if for any $2$-partition $(X, Y)$ of $V(G)$, there is an edge $xy \in E(G)$ such that $x \in X$ and $y \in Y$, and \emph{disconnected} otherwise. A \emph{connected component} of a graph $G$ is a maximal connected subgraph of $G$. A \emph{path} is a connected graph of maximum degree two, having precisely two vertices of degree one, called its \emph{endpoints}. The \emph{length} of a path is its number of edges. Given a graph $G$ and two vertices $u$ and $v$, the \emph{distance} between $u$ and $v$, denoted by $\dist_G(u, v)$ (or simply $\dist(u, v)$ if $G$ is clear from the context), is the length of the shortest path in $G$ that has $u$ and $v$ as endpoints.

A graph $G$ is a \emph{complete graph} if every pair of vertices of $G$ is adjacent. A set $C\subseteq V(G)$ is a clique if $G[C]$ is a complete graph. A set $S\subseteq V(G)$ is an independent set if $G[S]$ has no edges. A graph $G$ is a \emph{bipartite graph} if its vertex set can be partitioned into two independent sets. 
A bipartite graph with bipartition $(A, B)$ is a \emph{complete bipartite graph} if all pairs consisting of one vertex from $A$ and one vertex from $B$ are adjacent, and with $a = \card{A}$ and $b = \card{B}$, we denote it by $K_{a, b}$. A \emph{star} is the graph $K_{1,b}$, with $b\geq 2$, and we call \emph{center} the unique vertex of degree $b$ and \emph{leaves} the vertices of degree one.

\medskip
\noindent\textbf{Colorings.} Given a graph $G$, a map $\coloring\colon V(G) \to [k]$ is called a \emph{coloring of $G$ with $k$ colors}.
If for every pair of adjacent vertices, $uv \in E(G)$, we have that $\coloring(u) \neq \coloring(v)$, then the coloring $\coloring$ is called \emph{proper}. For $i \in [k]$, we call the set of vertices $u \in V(G)$ such that $\coloring(u) = i$ the \emph{color class $i$}. If for all $i \in [k]$, there exists a vertex $x_i \in V(G)$ such that 
\begin{enumerate}[(i)]
	\item $\coloring(x_i) = i$, and
	\item for each $j \in [k] \setminus \{i\}$, there is a neighbor $y \in N_G(x_i)$ of $x_i$ such that $\coloring(y) = j$,
\end{enumerate}
then $\coloring$ is called a \emph{$b$-coloring} of $G$. For $i \in [k]$, we call a vertex $x_i$ satisfying the above two conditions a \emph{$b$-vertex for color $i$}.

\medskip
\noindent\textbf{Parameterized Complexity.} Let $\Sigma$ be an alphabet. A \emph{parameterized problem} is a set $\Pi \subseteq \Sigma^* \times \bN$. A parameterized problem $\Pi$ is said to be \emph{fixed-parameter tractable}, or contained in the complexity class \FPT{}, if there exists an algorithm that for each $(x, k) \in \Sigma^* \times \bN$ decides whether $(x, k) \in \Pi$ in time $f(k)\cdot \card{x}^c$ for some computable function $f$ and fixed integer $c \in \bN$. A parameterized problem $\Pi$ is said to be contained in the complexity class \XP{} if there is an algorithm that for all $(x, k) \in \Sigma^* \times \bN$ decides whether $(x, k) \in \Pi$ in time $f(k)\cdot n^{g(k)}$ for some computable functions $f$ and $g$.

A \emph{kernelization algorithm} for a parameterized problem $\Pi \subseteq \Sigma^* \times \bN$ is a polynomial-time algorithm that takes as input an instance $(x, k) \in \Sigma^* \times \bN$ and either correctly decides whether $(x, k) \in \Pi$ or outputs an instance $(x', k') \in \Sigma^* \times \bN$ with $\card{x'} + k' \le f(k)$ for some computable function $f$ for which $(x, k) \in \Pi$ if and only if $(x', k') \in \Pi$. We say that $\Pi$ \emph{admits a kernel} if there is a kernelization algorithm for $\Pi$.

\section{Hardness Results}\label{sec:hardness}
In this section we prove the hardness results of our complexity dichotomy. First, we show that \bprob{} and \bcol{} are \NP{}-complete for $k = m(G) - 1 = \Delta(G)$, based on a reduction due to Havet et al.~\cite{HSS12} who showed \NP{}-completeness for the case $k = m(G)$. 

\begin{theorem}\label{thm:npc:delta:mG-1}
\bprob{} and \bcol{} are \NP{}-complete, even when $k=m(G)-1=\Delta(G)$.
\end{theorem}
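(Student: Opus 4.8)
The plan is to build on the reduction of Havet et al.~\cite{HSS12}, which establishes \NP-completeness of \bcol{} and \bprob{} for $k = m(G)$. Since membership in \NP{} is immediate (a purported $b$-coloring together with the list of $b$-vertices can be verified in polynomial time), the whole task is to engineer hardness for the shifted target $k = m(G) - 1 = \Delta(G)$. First I would recall precisely what the Havet et al.\ instance looks like: a polynomial-time reduction from some \NP-hard source (e.g.\ a suitable \textsc{SAT}- or coloring-type problem) producing a graph $G$ in which the target value $k$ equals $m(G)$, and for which $G$ has a $b$-coloring with $m(G)$ colors if and only if the source instance is a \yes-instance. The key structural feature I want to exploit is control over the quantities $m(G)$ and $\Delta(G)$.

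The main idea is a \emph{padding gadget}. Given the graph $G$ produced by the Havet et al.\ reduction with its target $k = m(G)$, I would construct a new graph $G'$ by attaching a gadget that raises the $m$-degree by exactly one while leaving the answer unchanged, so that the new target becomes $k' = k = m(G') - 1$, and simultaneously arrange that $\Delta(G') = k'$. Concretely, I would add a fresh vertex (or small constellation of vertices) of high degree together with enough pendant or auxiliary vertices to create one additional vertex of degree at least $i-1$ in the count defining $m(\cdot)$, thereby incrementing $m$ by one; the gadget should be designed so that it can always be $b$-colored consistently with any $b$-coloring of the original part, hence contributing no obstruction. The delicate point is to do this without accidentally increasing the \emph{maximum} degree beyond the intended $\Delta(G') = k'$, and without creating spurious $b$-vertices that would let one $b$-color $G'$ even when the source instance is a \no-instance.

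The order of steps I would carry out is: (1) fix notation for the Havet et al.\ instance and state the two invariants $m(G) = k$ and the equivalence with the source problem; (2) define the augmented graph $G'$ together with the gadget, and verify the arithmetic of the $m$-degree, namely that $m(G') = m(G) + 1 = k + 1$, so that $k = m(G') - 1$; (3) verify the maximum-degree condition $\Delta(G') = k = m(G') - 1$, which gives the full chain $k = m(G') - 1 = \Delta(G')$ claimed in the statement; and (4) prove the equivalence that $G'$ has a $b$-coloring with $k$ colors if and only if $G$ does, by showing the gadget is always extendable in one direction and imposes no help in the other. Since both problems coincide in target value here, the same construction handles \bcol{} and \bprob{} together, as $\chi_b(G') \ge k$ with $k = \Delta(G')$ is equivalent to the existence of a $b$-coloring with exactly $k$ colors.

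The step I expect to be the main obstacle is (2)–(3) in tandem: simultaneously forcing $m(G')$ up by precisely one \emph{and} pinning $\Delta(G')$ to the value $k$, while keeping the gadget ``inert'' for the coloring equivalence. The $m$-degree is a somewhat global counting quantity (the largest $i$ with $i$ vertices of degree $\ge i-1$), so adding vertices can perturb it in unintended ways; I would need to check carefully that the newly added vertices and any degree changes they induce on existing vertices shift the count by exactly one and do not inadvertently push it up by more. Achieving the equality $\Delta(G') = k$ rather than merely $\Delta(G') \le \Delta(G)$ may require that the Havet et al.\ construction already has $\Delta(G) = k - 1$ or can be assumed to, or else that the gadget itself realizes the maximum degree $k$; reconciling these constraints is where the real work lies.
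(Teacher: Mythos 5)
Your overall strategy --- pad the construction of Havet et al.\ so that the $m$-degree ends up exactly one above the target while the maximum degree equals the target --- is the same strategic move the paper makes, but your concrete mechanism cannot work as described. In the Havet et al.\ graph the $n$ clique vertices and the three star centers all have degree exactly $n+2$ and every other vertex has smaller degree, so $m = n+3$, $\Delta = n+2$, and the target is $k = n+3$. To achieve $m(G') = k+1 = n+4$ you need $n+4$ vertices of degree at least $n+3$; since the original graph has \emph{no} vertex of degree $\ge n+3$ at all, attaching ``a fresh vertex plus pendants'' cannot increment $m$ by one --- you would have to create all $n+4$ high-degree vertices from scratch, i.e.\ rebuild the gadget part wholesale. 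This is precisely what the paper does: rather than grafting something onto the output of the old reduction, it \emph{replaces} the three stars $K_{1,n+2}$ by one copy of $K_{n,n+3}$ and two copies of $K_{2,n+3}$, producing exactly $n + 2 + 2 = n+4$ vertices of degree $n+3$ (the clique vertices keep degree $n+2$, the edge-vertices degree $2$), so that $m(H) = n+4$ and $\Delta(H) = n+3 = k$ fall out simultaneously.

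The second, more serious gap is in your step (4) and your closing sentence about \bprob{}. Keeping the padding ``inert'' is the crux of the whole proof, and you give no mechanism for it. The paper's key lemma (its Claim 3.1) is that in any $b$-coloring with at least $n+3$ colors, each complete bipartite component can host $b$-vertices of at most one color: a $b$-vertex there must have degree $n+3$, so all other $k-1$ colors appear on its neighborhood, forcing its twins on the same side to repeat its color. This lemma does double duty. First, it forces the clique vertices $v_1,\ldots,v_n$ to be the $b$-vertices for $n$ of the colors in any $b$-coloring with $n+3$ colors, which is what extracts the proper $3$-edge-coloring in the reverse direction --- without it, your gadget could supply spurious $b$-vertices in \no{}-instances. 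Second, it shows $H$ has \emph{no} $b$-coloring with $n+4 = m(H)$ colors. That impossibility is exactly what you silently assume when you assert that $\chi_b(G') \ge k$ with $k = \Delta(G')$ is ``equivalent'' to the existence of a $b$-coloring with exactly $k$ colors: since $\chi_b(G') \le \Delta(G') + 1 = k+1$, the inequality $\chi_b(G') \ge k$ could a priori be witnessed only by a $b$-coloring with $k+1$ colors, and $b$-colorings are not downward monotone in the number of colors (the paper recalls that $b$-continuity fails in general). So the \bprob{} half of the theorem genuinely requires ruling out the value $k+1$, an argument your proposal lacks entirely.
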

\begin{proof}
As in the proof of Havet et al.~\cite{HSS12}, the reduction is from the \NP{}-complete problem \textsc{3-Edge Coloring} of 3-regular graphs, which takes as input a 3-regular graph $G$ and asks whether the edges of $G$ can be properly colored with three colors.

Given an instance~$G$ of \textsc{3-Edge Coloring}, an instance $H$ of \bprob{} and \bcol{} is constructed as follows. The graph $H$ has one vertex for each vertex of $G$, that we denote by $v_1,\ldots,v_n$, one vertex for each edge, that we denote by $u_1,\ldots,u_m$ and a set of $4n+13$ vertices that we denote by $S$. The edge set of $H$ is such that $H[\{v_1,\ldots,v_n\}]$ is a clique, $H[S]$ is the disjoint union of one copy of the complete bipartite graph~$K_{n,n+3}$ and two copies of $K_{2,n+3}$ and $v_iu_j$ is an edge if the edge corresponding to $u_j$ is incident to the vertex corresponding to $v_i$ in $G$. 
The constructed graph $H$ is such that $\Delta(H)=n+3$ and $H$ has $n+4$ vertices of degree $n+3$, which implies that $m(H)=n+4$. The difference to the construction used in~\cite{HSS12} is that instead of the three complete bipartite graphs mentioned above, the authors use three copies of the star $K_{1,n+2}$.
\begin{nestedclaim}\label{claim:twins}
A connected component of $H$ that is a complete bipartite graph can contain $b$-vertices of at most one color in any $b$-coloring of $H$ with at least $n+3$ colors.
\end{nestedclaim}
\begin{claimproof}
Consider a component of $H$ that induces a $K_{i,n+3}$, with $i\in\{2,n\}$. In any $b$-coloring with $k\geq n+3$ colors, only the vertices of degree at least $n + 2$, so in this case the vertices of degree $n+3$ can be $b$-vertices in $H$. If $x$ is a $b$-vertex for a given color, then the remaining $k-1$ colors appear on the vertices of~$N(x)$. We conclude that any other vertex of degree $n+3$ of this component will be assigned the same color as $x$.
\end{claimproof}
\noindent We prove that $H$ has a $b$-coloring with $k=n+3$ colors if and only if $G$ is a \yes{}-instance for \textsc{3-Edge Coloring} by using the same steps as in the proof of Theorem 3 of \cite{HSS12} and with the additional use of Claim~\ref{claim:twins}. This proves the \NP{}-completeness of \bcol{} when $k=m(G)-1=\Delta(G)$. Furthermore, we prove that $\chi_b(H)\geq n+3$ if and only if $H$ has a $b$-coloring with $n+3$ colors. This yields the analogous result for \bprob{}.

 First, assume that $G$ is a \yes{}-instance for \textsc{3-Edge Coloring}. Let $\coloring_E \colon E(G) \to [3]$ be a proper 3-edge coloring for $G$. We construct a $b$-coloring $\coloring_{H}$ for $H$ in the following way. For each $1\leq i\leq \card{E(G)}$, $\coloring_{H}(u_i)=\coloring_E(e_i)$ and each $1\leq j\leq n$, we let $\coloring_{H}(v_j)=j+3$. Note that since $\coloring_E$ is a 3-edge coloring for $G$, the vertices $v_1,\ldots,v_n$ in $H$ are $b$-vertices for the colors $4,\ldots,n+3$: Any vertex in $G$ is incident with $3$ edges since $G$ is $3$-regular, and since $\coloring_E$ is proper, each such edge receives a different color. Hence, for any vertex $v_i$, the colors $\{1, 2, 3\}$ appear on $N_H(v_i) \cap \{u_1, \ldots, u_{\card{E(G)}}\}$.
Now we can color the rest of the graph $H$ in such a way that each connected component that is a complete bipartite graph contains a $b$-vertex for one of the three remaining colors. 

Now we consider the other direction. We start by observing that Claim~\ref{claim:twins} implies that $H$ does not admit a $b$-coloring with $n+4=m(H)=\Delta(H)+1$ colors, since the set of vertices of degree $n+3$ can contain $b$-vertices for at most three colors in any such a $b$-coloring. This implies that $\chi_b(H)\geq m(H)-1=\Delta(H)$ if and only if $H$ has a $b$-coloring with $m(H)-1=\Delta(H)$ colors.

Assume $H$ has a $b$-coloring $\coloring_{H}$ with $n+3$ colors. Since by Claim~\ref{claim:twins} the set $S$ contains $b$-vertices for at most three colors, we have that the vertices $v_1,\ldots,v_n$ are $b$-vertices in this coloring. Moreover, since they induce a clique in $H$, they all have distinct colors. Assume, without loss of generality, that $\coloring_{H}(\{v_1,\ldots,v_n\})=\{4,\ldots,n+3\}$. 
This implies that for each $i$, $\coloring_{H}(N(v_i)\cap\{u_1,\ldots,u_{\card{E(G)}}\})=\{1,2,3\}$. 
It follows that $\coloring_E \colon E(G) \to \bN$, defined as $\coloring_E(e_i) = \coloring_H(u_i)$, for $i \in \{1, \ldots, \card{E(G)}\}$, is a $3$-edge coloring of $G$. We argue that $\coloring_E$ is proper. Suppose for a contradiction that there exist adjacent edge $e_i$ and $e_j$, sharing the endpoint $v_s$, such that $\coloring_E(e_i) = \coloring_E(e_j) = c$. Since $\deg_G(v_s) = 3$, and two of its incident edges received the same color $c$, we can conclude that at least one of the colors $\{1, 2, 3\}$ does not appear in the neighborhood of $v_s$ in $H$, a contradiction with the fact that $v_s$ is a $b$-vertex of its color in $\coloring_H$.
\end{proof}

The previous theorem, together with the result that \bcol{} is \NP{}-complete when~$k = \Delta(G) + 1$ \cite{KTV02} and when~$k = m(G)$ \cite{HSS12}, proves Theorem~\ref{thm:main}(\ref{thm:main:np-c:01}). We now turn to the proof of Theorem~\ref{thm:main}(\ref{thm:main:np-c:2+}), that is, we show that \bcol{} remains \NP{}-complete for $k = 3$ if $k=\Delta(G)+1-p$ or $k=m(G)-p$ for any $p\geq 2$, based on a reduction due to Sampaio~\cite{Sam12}.

\begin{proposition}\label{prop:NP:c:fixed}
		For any fixed integer $c \ge 1$, it is $\NP$-complete to decide whether a graph~$G$ with $\Delta(G) = c + 3$ or $m(G) = c + 4$ has a $b$-coloring with $3$ colors.
\end{proposition}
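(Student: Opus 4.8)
Following Sampaio~\cite{Sam12}, the plan is to reduce from \textsc{3-Coloring} restricted to graphs of maximum degree at most $4$, which is already \NP{}-hard. Given such an instance $G_0$ (so $\Delta(G_0) \le 4 \le c+3$, using $c \ge 1$), I would build the graph $G$ as the disjoint union of $G_0$ with three complete bipartite components $D_1, D_2, D_3$, where $D_1 = K_{c+3, c+3}$ and $D_2, D_3$ are two small copies of $K_{1,2}$. The role of $G_0$ is to carry the hardness, namely its proper $3$-colorability, while the three gadget components exist only to furnish the three required $b$-vertices and, in the case of $D_1$, to create enough vertices of large degree to control the parameters.

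The correctness rests on two simple facts. First, the restriction of any proper $3$-coloring of $G$ to $G_0$ is again proper; hence if $G$ admits a $b$-coloring with $3$ colors, then $G_0$ is $3$-colorable. Second, a complete bipartite component can be colored so as to host a $b$-vertex for any single prescribed color $i$: color one part entirely with $i$ and split the other part (which has at least two vertices) between the two remaining colors, so that every vertex of the first part sees both other colors. The same reasoning behind Claim~\ref{claim:twins} shows that such a component can host $b$-vertices of \emph{at most} one color, but for the reduction I only need this positive direction. Using it, given a proper $3$-coloring of $G_0$ I color $D_i$ to host a $b$-vertex for color $i$ for each $i \in \{1,2,3\}$; since $G_0, D_1, D_2, D_3$ lie in distinct components, the resulting coloring of $G$ is proper and every color has a $b$-vertex, so $G$ has a $b$-coloring with $3$ colors. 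Together with the first fact, this yields that $G$ has a $b$-coloring with $3$ colors if and only if $G_0$ is a \yes{}-instance of \textsc{3-Coloring}.

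It then remains to check the degree parameters. All $2(c+3)$ vertices of $D_1 = K_{c+3,c+3}$ have degree $c+3$, the components $D_2, D_3$ have maximum degree $2$, and $\Delta(G_0) \le c+3$, so $\Delta(G) = c+3$. Furthermore $D_1$ alone provides $2(c+3) \ge c+4$ vertices of degree $c+3$, while no vertex of $G$ has degree larger than $c+3$; hence the largest $i$ for which $G$ has $i$ vertices of degree at least $i-1$ equals $c+4$, i.e.\ $m(G) = c+4$. Since a candidate coloring can be verified to be a $b$-coloring with $3$ colors in polynomial time, the problem is in \NP{}, and the reduction establishes \NP{}-completeness simultaneously for the class of graphs with $\Delta(G) = c+3$ and for the class of graphs with $m(G) = c+4$.

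The step I expect to require the most care is the padding: the gadget components must pin $\Delta$ and $m$ down to exactly $c+3$ and $c+4$ without ever changing which instances are positive. The observation that makes this safe is that the hardness can be fully decoupled from the parameter constraints---the only property of $G_0$ that matters is its $3$-colorability, and the gadgets only ever \emph{add} $b$-vertices, which can never turn a non-$3$-colorable $G_0$ into a positive instance because properness is inherited by the restriction to $G_0$. This is precisely why the padding must use complete bipartite components: a clique of the right size would raise the degrees equally well but would destroy $3$-colorability, and an arbitrary dense gadget might fail to be $3$-colorable at all. The one remaining point to watch is that $\Delta(G_0) \le c+3$, which is guaranteed by starting from the bounded-degree version of \textsc{3-Coloring}.
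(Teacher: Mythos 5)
Your proof is correct and follows essentially the same route as the paper: a padding reduction from \textsc{3-Coloring} on maximum-degree-$4$ graphs (the paper uses Sampaio's reduction from planar $4$-regular instances) that adds disjoint gadget components supplying the three $b$-vertices while pinning down $\Delta$ and $m$. The only difference is cosmetic---you pad with one $K_{c+3,c+3}$ plus two copies of $K_{1,2}$ and thereby handle $\Delta(G)=c+3$ and $m(G)=c+4$ with a single construction, whereas the paper pads with stars and treats the two parameter statements separately.
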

\begin{proof}
Sampaio showed that the problem of deciding whether a graph $G$ has a $b$-coloring with $k$ colors is $\NP$-complete for any fixed $k \in \bN$~\cite[Proposition 4.5.1]{Sam12}. For the case of $k=3$, the reduction is from \textsc{$3$-Coloring} on planar $4$-regular graphs which is known to be $\NP$-complete~\cite{GJS76}. In this reduction, one takes the graph of the \textsc{$3$-Coloring} instance and adds three stars with two leaves each to the graph which can serve as the $b$-vertices in the resulting instance of \bcol{}.
Since this does not increase the maximum degree, we immediately have that the problem of deciding whether a graph of maximum degree four has a $b$-coloring with three colors is $\NP$-complete. Furthermore, by adding more leaves to one of the stars and thereby increasing the maximum degree of the graph in the resulting instance, we have that for any fixed integer $c \ge 1$, it is \NP{}-complete to decide whether a graph of maximum degree $\Delta(G) = c + 3$ has a $b$-coloring with three colors. 

Towards the statement regarding $m(G)$, we first observe that for a $4$-regular graph $G$ on at least five vertices, we have that $m(G) = 5$.
We observe that in any star with \emph{at least} two leaves, the center vertex can be a $b$-vertex in a coloring with three colors. We construct a graph $G'$ by adding five stars with four leaves each to $G$, and we again have that $G$ has a $3$-coloring if and only if $G'$ has a $b$-coloring with three colors, showing that the problem of deciding whether a graph $H$ with $m(H) = 5$ has a $b$-coloring with three colors, is \NP{}-complete. Note that in this reduction, the center vertices of the stars can be regarded as the vertices determining the $m$-degree of the graph in the resulting instance of $b$-coloring with three colors, so we can extend this result in a similar way as above. 
That is, for any $c \ge 1$, given a $4$-regular graph $G$, we can add $c + 4$ stars with $c + 4 - 1$ leaves each to $G$, implying that for the resulting graph $G'$, $m(G') = c + 4$. Again, $G$ has a $3$-coloring if and only if $G'$ has a $b$-coloring with three colors, implying the second statement of the proposition.
\end{proof}

We conclude this section by considering the complexity of the two problems on graphs with few vertices of high degree. Since \bprob{} and \bcol~ are known to be \NP{}-complete when $k=\Delta(G)+1$~\cite{KTV02}, we make the following observation which is of relevance to us since in Section~\ref{sec:algorithms:ell+k}, we show that \bcol{} is \FPT{} parameterized by $\Delta(G)+\ell_k(G)$.

\begin{observation} \label{obs:lkG}
\bprob{} and \bcol{} are \NP{}-complete on graphs with $\ell_k(G)=0$, where $k$ is the integer associated with the respective problem.
\end{observation}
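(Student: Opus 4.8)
The plan is to observe that the hypothesis $\ell_k(G) = 0$ is simply a restatement of a degree bound, and then to check that the reduction already cited for the case $k = \Delta(G)+1$ produces instances meeting this bound. Recall that $\ell_k(G)$ counts the vertices of degree at least $k$; hence $\ell_k(G) = 0$ holds precisely when every vertex of $G$ has degree at most $k-1$, i.e.\ when $\Delta(G) \le k-1$. Since $\chi_b(G) \le \Delta(G)+1$, a graph with $\ell_k(G) = 0$ can admit a $b$-coloring with $k$ colors only if $\Delta(G) = k-1$, that is, only if $k = \Delta(G)+1$. So the regime $\ell_k(G) = 0$ in which the problems are non-trivial is exactly the regime $k = \Delta(G)+1$.

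First I would invoke the result of Kratochv\'il et al.~\cite{KTV02} that \bcol{} is \NP{}-complete when $k = \Delta(G)+1$. In every instance produced by that reduction, every vertex has degree at most $\Delta(G) = k-1$, so by the remark above no vertex has degree at least $k$, giving $\ell_k(G) = 0$. Thus the hardness already lives on the restricted class, and membership in \NP{} is immediate: a $b$-coloring is a certificate of size $\cO(n)$, and verifying that it is proper and that each color class contains a $b$-vertex takes polynomial time. This establishes the claim for \bcol{}.

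For \bprob{} I would argue that on these same instances the two problems coincide. Because $k = \Delta(G)+1$ is the trivial upper bound on $\chi_b(G)$, the question ``$\chi_b(G) \ge k$'' is equivalent to ``$\chi_b(G) = \Delta(G)+1$''; and by definition of the $b$-chromatic number, $\chi_b(G) = \Delta(G)+1$ holds if and only if $G$ admits a $b$-coloring with $\Delta(G)+1 = k$ colors. (Here the usual gap between $\chi_b(G) \ge k$ and the existence of a $b$-coloring with $k$ colors does not arise, precisely because $k$ is the maximum conceivable value.) Hence \bprob{} with $k = \Delta(G)+1$ is \NP{}-complete as well, again on instances with $\ell_k(G) = 0$.

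There is no substantial obstacle here: the only thing to get right is the translation $\ell_k(G) = 0 \iff \Delta(G) < k$ and the accompanying observation that this forces $k = \Delta(G)+1$ whenever a $k$-color $b$-coloring is possible, so that the previously known hardness results apply verbatim. The statement is thus an immediate consequence of~\cite{KTV02} together with the upper bound $\chi_b(G) \le \Delta(G)+1$.
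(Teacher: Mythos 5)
Your proof is correct and follows essentially the same route as the paper, which states this observation as an immediate consequence of the \NP{}-completeness of \bcol{} for $k = \Delta(G)+1$ due to Kratochv\'il et al.~\cite{KTV02}, using exactly the translation $\ell_k(G)=0 \iff \Delta(G) \le k-1$. Your added details---that the two problems coincide at $k = \Delta(G)+1$ because this is the maximum conceivable value of $\chi_b(G)$, so no $b$-continuity gap arises---are a correct filling-in of what the paper leaves implicit.
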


\section{Dichotomy Algorithms}\label{sec:alg:dich}
In this section we give the algorithms in our dichotomy result, proving Theorem~\ref{thm:main}(\ref{thm:main:poly}). We show that for fixed $k \in \bN$, the problem of deciding whether a graph $G$ admits a $b$-coloring with $k$ colors is polynomial-time solvable when $k = m(G)$ (Section~\ref{sec:alg:dich:mG}), when $k = \Delta(G)$ (Section~\ref{sec:alg:dich:Delta}), and when $k = m(G) - 1$ (Section~\ref{sec:alg:dich:mG-1}), by providing \XP{}-algorithms for each case. Before we give the algorithms, we introduce the notion of a \emph{$b$-precoloring} and show how to enumerate all minimal $b$-precolorings of a graph.

\subsection{$b$-Precolorings}\label{sec:b-precol}
All algorithms in this section are based on guessing a proper coloring of several vertices in the graph, for which we now introduce the necessary terminology and establish some preliminary results.

\begin{definition}[Precoloring]
	Let $G$ be a graph and $k \in \bN$. A \emph{precoloring} with $k$ colors of a graph $G$ is an assignment of colors to a subset of its vertices, i.e.\ for $X \subseteq V(G)$, it is a map $\coloring_X \colon X \to [k]$. We call $\coloring_X$ \emph{proper}, if it is a proper coloring of $G[X]$.
	We say that a coloring $\coloring\colon V(G) \to [k]$ \emph{extends} $\coloring_X$, if $\coloring|_{X} = \coloring_X$.
\end{definition}

We use the following notation. For two precolorings $\coloring_X$ and $\coloring_Y$ with $X \cap Y = \emptyset$, we denote by $\coloring_X \cup \coloring_Y$ the precoloring that colors the vertices in $X$ according to $\coloring_X$ and the vertices in $Y$ according to $Y$, i.e.\ the precoloring $\coloring_{X \cup Y} \defeq \coloring_X \cup \coloring_Y$ defined as:
\begin{align*}
	\coloring_{X \cup Y}(v) = \left\lbrace\begin{array}{ll}\coloring_X(v), &\mbox{ if } v \in X \\ \coloring_Y(v), &\mbox{ if } v \in Y \end{array}\right. \mbox{ for all } v \in X \cup Y
\end{align*}

Next, we define a special type of precoloring with the property that any proper coloring that extends it is a $b$-coloring of the graph.

\begin{definition}[$b$-Precoloring]
	Let $G$ be a graph, $k \in \bN$, $X \subseteq V(G)$ and $\coloring_X$ a precoloring. We call $\coloring_X$ a \emph{$b$-precoloring with $k$ colors} if $\coloring_X$ is a $b$-coloring of $G[X]$. A $b$-precoloring $\coloring_X$ is called \emph{minimal} if for any $Y \subset X$, $\coloring_X|_{Y}$ is not a $b$-precoloring.
\end{definition}

It is immediate that any $b$-coloring can be obtained by extending a minimal $b$-precoloring, a fact that we capture in the following observation.

\begin{observation}\label{obs:bcol:bprecol}
	Let $G$ be a graph, $k \in \bN$, and $\coloring$ a $b$-coloring of $G$ with $k$ colors. Then, there is a set $X \subseteq V(G)$ such that $\coloring|_X$ is a minimal $b$-precoloring.
\end{observation}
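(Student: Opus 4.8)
The plan is to prove Observation~\ref{obs:bcol:bprecol} by a direct construction: starting from the full $b$-coloring $\coloring$, extract a small witness set $X$ of vertices whose induced coloring is itself a $b$-coloring of $G[X]$, and then trim it to be minimal. First I would recall that since $\coloring$ is a $b$-coloring of $G$ with $k$ colors, for each color $i \in [k]$ there exists a $b$-vertex $x_i \in V(G)$ with $\coloring(x_i) = i$ that has, for every other color $j \in [k] \setminus \{i\}$, a neighbor $y^i_j \in N_G(x_i)$ with $\coloring(y^i_j) = j$. I would then set $X$ to be the union of all these $b$-vertices together with all their chosen color-witnessing neighbors, i.e.\ $X \defeq \{x_i \mid i \in [k]\} \cup \{y^i_j \mid i \in [k],\, j \in [k]\setminus\{i\}\}$.

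The key verification is that $\coloring|_X$ is a $b$-coloring of $G[X]$. Properness is immediate: $\coloring|_X$ is the restriction of the proper coloring $\coloring$ to an induced subgraph, so no monochromatic edge can appear. For the $b$-coloring condition, I would observe that each $x_i$ remains a valid $b$-vertex for color $i$ inside $G[X]$, because every witnessing neighbor $y^i_j$ was explicitly included in $X$ and the edge $x_i y^i_j$ lies in $G[X]$; hence $x_i$ still sees all $k-1$ other colors within its neighborhood in $G[X]$. Every color $i \in [k]$ therefore has a $b$-vertex in $G[X]$ under $\coloring|_X$, so $\coloring|_X$ is indeed a $b$-coloring of $G[X]$, i.e.\ a $b$-precoloring with $k$ colors.

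Finally, to obtain minimality, I would argue that if $\coloring|_X$ is not already minimal, we can repeatedly pass to a proper subset whose restriction is still a $b$-precoloring; since $X$ is finite and each step strictly decreases its size, this process terminates at some $X' \subseteq X$ for which $\coloring|_{X'}$ is a minimal $b$-precoloring. As $X' \subseteq X \subseteq V(G)$, this gives the desired set. I do not expect any genuine obstacle here, since the statement is essentially definitional; the only point requiring care is making explicit that the witnessing neighbors must be placed in $X$ so that the $b$-vertex property survives restriction to the induced subgraph $G[X]$, which is exactly where the argument could go wrong if one were careless about which neighbors are retained.
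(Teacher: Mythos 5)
Your construction is exactly the argument the paper has in mind: the paper states this observation as immediate and, where it uses it (e.g.\ in the proof of Theorem~\ref{thm:xp:mG}), it takes precisely your witness set of $b$-vertices together with $k-1$ color-witnessing neighbors each, and then passes to a minimal $b$-precoloring inside it. Your proof is correct, including the finite-descent step to minimality, so there is nothing to add.
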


The next observation captures the structure of minimal $b$-precolorings with $k$ colors. Roughly speaking, each such precoloring only colors  a set of $k$ $b$-vertices and for each $b$-vertex a set of $k-1$ of its neighbors that make that vertex the $b$-vertex of its color. We will use this property in the enumeration algorithm in this section to guarantee that we indeed enumerate all minimal $b$-precolorings with a given number of colors.

\begin{observation}\label{obs:min:precol:small}
	Let $\coloring_X$ be a minimal $b$-precoloring with $k$ colors. Then, $X = B \cup Z$, where 
	\begin{enumerate}[(i)]
		\item $B = \{x_1, \ldots, x_k\}$ and for $i \in [k]$, $\coloring_X(x_i) = i$, and
		\item $Z = \bigcup_{i \in [k]}{Z_i}$, where $Z_i \in \binom{N(x_i)}{k-1}$ and $\coloring_X(Z_i) = [k] \setminus \{i\}$.
	\end{enumerate}
\end{observation}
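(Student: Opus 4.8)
The plan is to produce the decomposition $X = B \cup Z$ explicitly from the data guaranteed by the definition of a $b$-precoloring, and then to invoke minimality to argue that nothing outside this decomposition can survive. First I would unpack the hypothesis: since $\coloring_X$ is a $b$-precoloring with $k$ colors, it is by definition a $b$-coloring of $G[X]$, so for every color $i \in [k]$ there is a $b$-vertex $x_i \in X$ with $\coloring_X(x_i) = i$. Setting $B \defeq \{x_1, \ldots, x_k\}$, the vertices $x_1, \ldots, x_k$ are pairwise distinct because they carry distinct colors, so $\card{B} = k$ and (i) holds. For the witness neighbours, for each $i \in [k]$ and each $j \in [k] \setminus \{i\}$ the $b$-vertex condition gives a neighbour of $x_i$ in $G[X]$ coloured $j$; picking one such neighbour for every $j$ yields a set $Z_i \subseteq N(x_i) \cap X$ whose image under $\coloring_X$ is exactly $[k] \setminus \{i\}$. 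Since these $k-1$ chosen vertices have pairwise distinct colours they are distinct, so $\card{Z_i} = k-1$, i.e.\ $Z_i \in \binom{N(x_i)}{k-1}$, giving (ii). Put $Z \defeq \bigcup_{i \in [k]} Z_i$.

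Next I would verify that $\coloring_X$ restricted to $B \cup Z$ is again a $b$-precoloring. Properness is immediate, as the restriction of a proper colouring to an induced subgraph is proper. For the $b$-vertex property, fix a color $i$: we have $x_i \in B \subseteq B \cup Z$, and for every $j \neq i$ the neighbour of $x_i$ chosen into $Z_i$ lies in $B \cup Z$ and is coloured $j$. Hence $x_i$ witnesses that colour $i$ has a $b$-vertex in $G[B \cup Z]$, and since this holds for all $i \in [k]$, the restriction $\coloring_X|_{B \cup Z}$ is a $b$-coloring of $G[B \cup Z]$, that is, a $b$-precoloring with $k$ colours.

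Finally I would close the argument with minimality. By construction $B \cup Z \subseteq X$, and we have just shown that $\coloring_X|_{B \cup Z}$ is a $b$-precoloring. If the inclusion were strict, this would contradict the minimality of $\coloring_X$; therefore $X = B \cup Z$, which is exactly the claimed structure. There is essentially no hard step here: the only point that needs care is the observation that the $b$-vertex property of each $x_i$ refers only to the chosen witness neighbours $Z_i$, so discarding all other vertices of $X$ cannot destroy it, which is precisely what lets minimality force the equality $X = B \cup Z$.
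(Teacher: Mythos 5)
Your proof is correct and follows exactly the argument the paper leaves implicit: the paper states Observation~\ref{obs:min:precol:small} without a formal proof, and your three steps (extract the $b$-vertices $B$ and witness sets $Z_i$ from the definition of a $b$-coloring of $G[X]$, verify that $\coloring_X|_{B \cup Z}$ is itself a $b$-precoloring, and invoke minimality to force $X = B \cup Z$) are precisely the intended justification. No gaps; in particular you correctly handle the distinctness of the $x_i$ and of the vertices within each $Z_i$ via their distinct colors, which is what makes $\card{Z_i} = k-1$ and hence $Z_i \in \binom{N(x_i)}{k-1}$.
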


We are now ready to give the enumeration algorithm for minimal $b$-precolorings.

\begin{lemma}\label{lem:enum:precol}
	Let $G$ be a graph on $n$ vertices and $k \in \bN$. The number of minimal $b$-precolorings with $k$ colors of $G$ is at most
	\begin{align}\label{eq:beta}
		\beta(k) \defeq n^k\cdot \Delta^{k(k-1)} \cdot (k-1)!^k,
	\end{align}
	where~$\Delta \defeq \Delta(G)$ and they can be enumerated in time~$\beta(k)\cdot k^{\cO(1)}$.
\end{lemma}
\begin{proof}
	By Observation~\ref{obs:min:precol:small}, any minimal $b$-precoloring only colors a set of $k$ $b$-vertices, and for each of them a size-$(k-1)$ subset of its neighbors that are colored bijectively with the remaining colors.
	
	To guess all $b$-vertices in $G$, we enumerate all ordered vertex sets of size $k$, let $\{x_1, \ldots, x_k\}$ be such a set. Next, we enumerate all size-$(k-1)$ subsets of neighbors of each $x_i$ that can make $x_i$ the $b$-vertex of color $i$. Let $(Z_1, \ldots, Z_k)$ be a tuple of such sets of neighbors. Then we enumerate for each $i \in [k]$, all bijective colorings of $\pi_i \colon Z_i \to [k] \setminus \{i\}$ -- these are precisely the colorings of $Z_i$ that can make $x_i$ the $b$-vertex for color $i$. Given such a tuple $(\pi_1, \ldots, \pi_k)$, we make sure that it is consistent, i.e.\ for each vertex $v \in Z_i \cap Z_j$, we ensure that $\pi_i$ and $\pi_j$ assign $v$ the same color, i.e.\ that $\pi_i(v) = \pi_j(v)$. 
	If so, we construct a precoloring $\coloring_{B \cup Z}$ according to our choice of $B=\{x_1, \ldots, x_k\}$ and $(\pi_1, \ldots \pi_k)$ and if it is a minimal $b$-precoloring we output it. We give the details in Algorithm~\ref{alg:enum:precol}.
	\begin{algorithm}[h]
		\SetKwInOut{Input}{Input}\SetKwInOut{Output}{Output}
		\Input{A graph $G$, a positive integer $k$.}
		\Output{All minimal $b$-precolorings with $k$ colors of $G$}
		\ForEach{$B \in \binom{V(G)}{k}$ and every ordering $x_1, \ldots, x_k$ of the elements of $B$\label{alg:enum:precol:guess:B}}{
			\ForEach{$(Z_1, \ldots, Z_k) \in \binom{N(x_1)}{k-1} \times \cdots \times \binom{N(x_k)}{k-1}$\label{alg:enum:precol:guess:Z}}{
				\ForEach{$(\pi_1, \ldots, \pi_k)$,\label{alg:enum:precol:guess:pi} 
					\begin{enumerate}[(i)]
						\item where for all $i \in [k]$, $\pi_i \colon Z_i \to [k] \setminus \{i\}$ is a bijection, and\label{alg:enum:precol:guess:pi:1}
						\item for all $v \in Z_i \cap Z_j$, $\pi_i(v) = \pi_j(v)$\label{alg:enum:precol:guess:pi:2}
					\end{enumerate}}{
						Let $Z \defeq \cup_{i \in [k]} Z_i$ and $\coloring_{B \cup Z} \colon B \cup Z \to [k]$\;
						\lFor{$i \in [k]$}{$\coloring_{B \cup Z}(x_i) \defeq i$}
						\lFor{$i \in [k]$ and $v \in Z_i$}{$\coloring_{B \cup Z}(v) \defeq \pi_i(v)$}
						\lIf{$\coloring_{B \cup Z}$ is a minimal $b$-precoloring\label{alg:enum:precol:check}}{\textbf{output} $\coloring_{B \cup Z}$\label{alg:enum:precol:output} \textbf{and continue}}
					}
			}
		}
		\caption{Enumerating all minimal $b$-precolorings with $k$ colors of a graph.}
		\label{alg:enum:precol}
	\end{algorithm}
	
	We now show that the algorithm is correct.
	\begin{nestedclaim}
		A precoloring $\coloring_X$ is a minimal $b$-precoloring with $k$ colors if and only if Algorithm~\ref{alg:enum:precol} returns it in line~\ref{alg:enum:precol:output} in some iteration.
	\end{nestedclaim}
	\begin{claimproof}
		Suppose Algorithm~\ref{alg:enum:precol} returns a precoloring $\coloring_X$. We first argue that $\coloring_X$ is well-defined. Let $X = B \cup Z$ following the notation of Algorithm~\ref{alg:enum:precol}. It is immediate that $\coloring_X|_{B}$ is well-defined. For the remaining vertices $v \in Z$, we verify as condition (\ref{alg:enum:precol:guess:pi:2}) in line~\ref{alg:enum:precol:guess:pi} that whenever $v \in Z_i \cap Z_j$, $\pi_i(v) = \pi_j(v)$. Hence if the tuple $(\pi_1, \ldots, \pi_k)$ passes the check in line~\ref{alg:enum:precol:guess:pi}, then for each vertex $v \in Z$ there is precisely one value for $\coloring_X(v)$. We can conclude that $\coloring_X$ is well-defined. By the check performed in line~\ref{alg:enum:precol:check}, we can conclude that $\coloring_X$ is a minimal $b$-precoloring.
		
		Now suppose that $G$ contains a minimal $b$-precoloring $\coloring_X$. By Observation~\ref{obs:min:precol:small}, $X$ consists of an (ordered) set of $b$-vertices $B = \{x_1, \ldots, x_k\}$ with $\coloring_X(x_i) = i$ for $i \in [k]$, and a set $Z$ that contains, for each $x_i$, a set of $k-1$ neighbors $Z_i \subseteq Z$ such that $\coloring_X(Z_i) = [k] \setminus \{i\}$. Since Algorithm~\ref{alg:enum:precol} enumerates all such possible sets in lines~\ref{alg:enum:precol:guess:B} and~\ref{alg:enum:precol:guess:Z}, we have that in some iteration, it guessed $B \cup Z$ as the set of vertices to color. Since the algorithm enumerates all combinations of possibilities of coloring the sets $Z_i$ bijectively with colors $[k] \setminus \{i\}$ in line~\ref{alg:enum:precol:guess:pi}, it guessed a tuple of bijections $(\pi_1, \ldots, \pi_k)$ from which we obtain $\coloring_X$. Clearly we have that in that case, $(\pi_1, \ldots, \pi_k)$ passes the check in line~\ref{alg:enum:precol:guess:pi} and by assumption, $\coloring_X$ passes the check in line~\ref{alg:enum:precol:check}.
	\end{claimproof}
	
	It remains to argue its runtime. In line~\ref{alg:enum:precol:guess:B}, there are $\binom{n}{k}$ choices for the set $B$ and $k!$ choices for its orderings, in line~\ref{alg:enum:precol:guess:Z}, there are at most $\binom{\Delta}{k-1}^k$ choices of $k$-tuples of size-$(k-1)$ sets of neighbors, and in line~\ref{alg:enum:precol:guess:pi}, we enumerate $(k-1)!^k$ $k$-tuples of bijections of sets of size $k-1$. The remaining steps can be executed in time $k^{\cO(1)}$: 
	By construction, $\card{B \cup Z} \le k^2$, and every color has a $b$-vertex. It remains to verify whether the coloring $\coloring_{B \cup Z}$ is proper on $G[B \cup Z]$ to conclude that it is a $b$-precoloring. If so, we can verify minimality in polynomial time by simply trying for each vertex $x \in B \cup Z$, whether $\coloring_{B \cup Z \setminus \{x\}}$ is still a $b$-precoloring. If we can find such a vertex $x$, then $\coloring_{B \cup Z}$ is not minimal, otherwise it is.
	The total runtime amounts to
	\begin{align*}
		\binom{n}{k}k! \cdot \binom{\Delta}{k-1}^k \cdot (k-1)!^k \cdot k^{\cO(1)} \le n^k \cdot \Delta^{k(k-1)} \cdot (k-1)!^k \cdot k^{\cO(1)} = \beta(k) \cdot k^{\cO(1)},
	\end{align*}
	as claimed. The upper bound of $\beta(k)$ on the number of $b$-precolorings with $k$ colors follows since the $k^{\cO(1)}$ factor in the runtime only concerns the construction of the precolorings and the verification of whether they are indeed $b$-precolorings.
\end{proof}

\subsection{Algorithm for $k = m(G)$}\label{sec:alg:dich:mG}
Our first application of Lemma~\ref{lem:enum:precol} is to solve the \bcol{} problem in the case when $k = m(G)$ in time \XP{} parameterized by $k$. It turns out that in this case, we are dealing with a \yes{}-instance as soon as we found a $b$-precoloring in the input graph that also colors all high-degree vertices (see Claim~\ref{claim:xp:mG:cor}).
\begin{theorem}\label{thm:xp:mG}
	Let $G$ be a graph. There is an algorithm that decides whether $G$ has a $b$-coloring with $k = m(G)$ colors in time $n^{k^2} \cdot 2^{\cO(k^2 \log k)}$.
\end{theorem}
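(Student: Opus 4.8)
The plan is to combine the enumeration of minimal $b$-precolorings provided by Lemma~\ref{lem:enum:precol} with a greedy completion argument that is made available by the special structure of the case $k = m(G)$. The starting observation is that, since $m(G) = k$, the graph $G$ has at most $k$ vertices of degree at least $k$: if there were $k+1$ such vertices, they would witness $m(G) \ge k + 1$, a contradiction. I would denote this set of high-degree vertices by $H$, so that $|H| \le k$. The complementary fact is that any vertex of degree at most $k-1$ can always be colored greedily: if all but one vertex of $G$ have been properly colored with colors from $[k]$ and the remaining vertex $v$ satisfies $\deg_G(v) \le k-1$, then at most $k-1$ colors are forbidden on $v$, so at least one of the $k$ colors is available.

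The key statement (the equivalence referred to as Claim~\ref{claim:xp:mG:cor}) is then: $G$ has a $b$-coloring with $k$ colors if and only if there is a minimal $b$-precoloring $\coloring_X$ with $k$ colors together with a proper coloring of $G[X \cup H]$ extending $\coloring_X$. For the forward direction, I would take a $b$-coloring $\coloring$ of $G$, restrict it to a minimal $b$-precoloring $\coloring_X$ via Observation~\ref{obs:bcol:bprecol}, and note that $\coloring|_{X \cup H}$ is the desired proper extension. For the backward direction, given a proper coloring $\coloring'$ of $G[X \cup H]$ extending $\coloring_X$, I would greedily extend $\coloring'$ to all of $V(G)$: the yet-uncolored vertices all lie outside $H$ and hence have degree at most $k-1$, so by the greedy fact above the extension never gets stuck. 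The resulting coloring is proper and extends the $b$-precoloring $\coloring_X$, so every color still has its $b$-vertex, and we obtain a $b$-coloring of $G$ with $k$ colors.

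With this equivalence in hand, the algorithm is straightforward: enumerate all minimal $b$-precolorings $\coloring_X$ using Lemma~\ref{lem:enum:precol}; for each, enumerate all at most $k^{|H \setminus X|} \le k^k$ colorings of the still-uncolored high-degree vertices $H \setminus X$; test whether the combined assignment is a proper coloring of $G[X \cup H]$; and if some test succeeds, answer \yes{} (an explicit $b$-coloring is then produced by the greedy completion). For the runtime, Lemma~\ref{lem:enum:precol} contributes the factor $\beta(k) = n^k \cdot \Delta^{k(k-1)} \cdot (k-1)!^k$; using $\Delta < n$ gives $n^k \cdot \Delta^{k(k-1)} \le n^{k^2}$, while the remaining factors $(k-1)!^k$, the $k^k$ guesses for $H \setminus X$, and the polynomial-in-$k$ checking all fold into $2^{\cO(k^2 \log k)}$, yielding the claimed bound $n^{k^2} \cdot 2^{\cO(k^2 \log k)}$.

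The main obstacle is establishing the structural equivalence correctly, and in particular handling the two kinds of constraints---being a $b$-coloring and being a proper coloring---in the right order. The subtle point is that the $b$-vertex conditions are fixed entirely by the precoloring $\coloring_X$, so the only remaining task after coloring $X \cup H$ is to complete a proper coloring; the bound $|H| \le k$ is exactly what guarantees that this completion has to deal only with low-degree vertices, for which greediness suffices. Everything else amounts to bookkeeping to confirm that guessing the colors of $H \setminus X$ keeps the procedure within the target running time.
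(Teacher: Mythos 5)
Your proposal is correct and takes essentially the same approach as the paper: you use the same bound $\card{D}\le k$ on the high-degree vertices, the same structural equivalence (your formulation of Claim~\ref{claim:xp:mG:cor} as ``minimal $b$-precoloring plus a proper coloring of the high-degree vertices extending it'' matches the paper's version up to the cosmetic subset~$S$), and the identical enumerate--guess--check--greedily-extend algorithm with the same $\beta(k)\cdot k^k\cdot k^{\cO(1)} \le n^{k^2}\cdot 2^{\cO(k^2\log k)}$ runtime analysis.
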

\begin{proof}
	Let $D \subseteq V(G)$ denote the set of vertices in $G$ that have degree at least $k$. Note that by the definition of $m(G)$, we have that $\card{D} \le k$.
	\begin{nestedclaim}\label{claim:xp:mG:cor}
		$G$ has a $b$-coloring with $k$ colors if and only if $G$ has a $b$-precoloring $\coloring_X$ such that $D \subseteq X$ and there exists $S\subseteq D$ such that $\coloring_X|_{(X \setminus S)}$ is a minimal $b$-precoloring.
	\end{nestedclaim}
	\begin{claimproof}
		Suppose $G$ has a $b$-precoloring $\coloring_X$ satisfying the condition of the claim. By our choice of $D$, each vertex in $V(G) \setminus D$ has degree at most $k-1$. Hence we can greedily compute and extension $\coloring$ of $\coloring_X$ that is a proper coloring of $G$. By the definition of~$b$-precoloring, we have that $\coloring$ is a $b$-coloring of $G$.
		
		Now suppose that $G$ has a $b$-coloring $\coloring$ with $k$ colors. Let $B = \{x_1, \ldots, x_k\}$ be the set of $b$-vertices of $\coloring$ and for each $i \in [k]$, let $Z_i$ be a set of $k-1$ neighbors of $x_i$ such that $\coloring(Z_i) = [k] \setminus \{i\}$. Let $Z \defeq \cup_{i \in [k]} Z_i$. Then, $\coloring|_{B \cup Z}$ is a $b$-precoloring. Clearly, $\coloring|_{B \cup Z}$ contains a minimal $b$-precoloring on vertex set $W\subseteq B\cup Z$. Then, $\coloring|_{W \cup D}$ is a $b$-precoloring of $G$ that satisfies the condition of the claim.
	\end{claimproof}
	
	The algorithm enumerates all minimal $b$-precolorings with $k$ colors and for each such precoloring, it enumerates all colorings of the vertices $D$. If combining one such pair of precolorings gives a $b$-precoloring, it returns a greedy extension of it; otherwise it reports that there is no $b$-coloring with $k$ colors, see Algorithm~\ref{alg:xp:mG:precol}.
	\begin{algorithm}[h]
		\SetKwInOut{Input}{Input}\SetKwInOut{Output}{Output}
		\Input{A graph $G$}
		\Output{A $b$-coloring with $m(G)$ colors if it exists, \no{} otherwise.}
		\ForEach{minimal $b$-precoloring $\coloring_X \colon X \to [k]$\label{alg:xp:mG:precol:enum:X}}{
			\ForEach{precoloring $\coloring_{D \setminus X} \colon (D \setminus X) \to [k]$\label{alg:xp:mG:precol:enum:D}}{
				\lIf{$\coloring_{X \cup D} \defeq \coloring_X \cup \coloring_{D \setminus X}$ is proper}{\Return a greedy extension of $\coloring_{X \cup D}$\label{alg:xp:mG:precol:yes}}
			}
		}
		\Return \no{}\;
		\caption{Algorithm for \bcol{} with $k = m(G)$.}
		\label{alg:xp:mG:precol}
	\end{algorithm}
	
	The correctness of the algorithm follows from the fact that it enumerates all precolorings that can satisfy Claim~\ref{claim:xp:mG:cor}. 
	We discuss its runtime. By Lemma~\ref{lem:enum:precol}, we can enumerate all minimal $b$-precolorings with $k$ colors in time $\beta(k) \cdot k^{\cO(1)}$. For each such minimal $b$-precoloring, we also enumerate all colorings of $D$. Since $\card{D} \le k$, this gives an additional factor of $k^k$ to the runtime which (with $\Delta \le n$) then amounts to
	\begin{align*}
		\beta(k) \cdot k^k \cdot k^{\cO(1)} = n^k \cdot \Delta^{k(k-1)} \cdot (k-1)!^k \cdot k^k \cdot k^{\cO(1)} \le n^{k^2} \cdot k!^k \cdot k^{\cO(1)} = n^{k^2} \cdot 2^{\cO(k^2\log k)},
	\end{align*}
	as claimed.
\end{proof}

\subsection{Algorithm for $k = \Delta(G)$}\label{sec:alg:dich:Delta}
Next, we turn to the case when $k = \Delta(G)$. Here the strategy is to again enumerate all minimal $b$-precolorings, and then for each such precoloring we check whether it can be extended to the remainder of the graph. Formally, we use an algorithm for the following problem as a subroutine.

\problemdef
	{Precoloring Extension (PrExt)}
	{A graph $G$, an integer $k$, and a precoloring $\coloring_X \colon X \to [k]$ of a set $X \subseteq V(G)$}
	{Does $G$ have a proper coloring with $k$ colors extending $\coloring_X$?}

Naturally, \textsc{Precoloring Extension} is a hard problem, since it includes \textsc{Graph Coloring} as the special case when $X = \emptyset$. However, when $\Delta(G) \le k - 1$, then the problem is trivially solvable: we simply check if the precoloring at the input is proper and if so, we compute an extension of it greedily. Since each vertex has degree at most $k-1$, there is always at least one color available. 
The case when $k = \Delta(G)$ has also been shown to be solvable in polynomial time.
\begin{theorem}[Thm.~3 in~\cite{CC06}, see also~\cite{DDJP15}]\label{thm:precoloring:extension}
	There is an algorithm that solves \textsc{Precoloring Extension} in polynomial time whenever $\Delta(G) \le k$.
\end{theorem}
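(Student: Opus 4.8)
The plan is to reduce \textsc{Precoloring Extension} with $\Delta(G) \le k$ to a \emph{list-coloring} instance in which every vertex has a list at least as large as its degree, and then to solve such instances in polynomial time. First I would check that $\coloring_X$ is a proper coloring of $G[X]$; if not, reject. Otherwise, set $H \defeq G - X$ and, for every uncolored vertex $v \in V(H)$, define its list of available colors $L(v) \defeq [k] \setminus \coloring_X(N_G(v) \cap X)$. A proper $k$-coloring of $G$ extending $\coloring_X$ exists if and only if $H$ admits an \emph{$L$-coloring}, i.e.\ a proper coloring $c$ of $H$ with $c(v) \in L(v)$ for all $v$: the ``only if'' direction is immediate, and for the ``if'' direction any $L$-coloring of $H$ together with $\coloring_X$ is proper on all of $G$, because every edge inside $X$ is handled by properness of $\coloring_X$, every edge inside $H$ by the $L$-coloring, and every edge between $X$ and $H$ by the fact that $L(v)$ excludes the colors of $v$'s precolored neighbors. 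The crucial numerical observation is that $|L(v)| \ge k - |N_G(v) \cap X| \ge \deg_G(v) - |N_G(v) \cap X| = \deg_H(v)$, so the reduced instance satisfies $|L(v)| \ge \deg_H(v)$ for every $v$.

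It therefore suffices to solve, in polynomial time, the \emph{degree-list-coloring} problem: given $H$ and lists $L$ with $|L(v)| \ge \deg_H(v)$ for all $v$, decide whether $H$ is $L$-colorable. I would process each connected component $C$ of $H$ independently. If some vertex $r \in C$ satisfies $|L(r)| > \deg_C(r)$ (which in particular always happens when $\Delta(G) \le k-1$), then an $L$-coloring of $C$ can be found greedily: fix a spanning tree of $C$ rooted at $r$ and color the vertices in order of non-increasing distance from $r$, so that $r$ is colored last. Every non-root vertex $v$ still has its parent uncolored when its turn comes, so at most $\deg_C(v) - 1 < |L(v)|$ of its neighbors are already colored and a free color remains; and $r$ has $|L(r)| > \deg_C(r)$, so a free color remains for $r$ as well.

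The remaining, and genuinely hard, case is a component $C$ in which $|L(v)| = \deg_C(v)$ for \emph{every} vertex, where the greedy argument breaks at the root and one must appeal to the structure of such ``tight'' instances, captured by the Erd\H{o}s--Rubin--Taylor / Borodin theory of degree-choosability. Decomposing $C$ into its blocks, if $C$ is \emph{not} a Gallai tree -- that is, if some block is neither a complete graph nor an odd cycle -- then $C$ is degree-choosable, so our specific lists admit a coloring; an actual $L$-coloring is produced by the constructive form of the Erd\H{o}s--Rubin--Taylor argument, which locates an even cycle or a suitable pair of vertices to serve as a ``kernel'', colors it so as to leave an extra free color at one vertex, and then finishes greedily as in the previous case. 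If $C$ \emph{is} a Gallai tree, then whether the \emph{specific} lists admit a coloring is decided by a dynamic program over the block-cut tree of $C$: rooting this tree and processing blocks from the leaves inward, I would maintain for each cut vertex the set of colors with which it can be colored consistently with an $L$-coloring of the already-processed part, resolving each clique block by a system-of-distinct-representatives (Hall) test and each odd-cycle block by a direct parity criterion. Each such block computation is polynomial and there are linearly many blocks, so the component -- and hence $G$ -- is decided in polynomial time.

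I expect the main obstacle to be precisely this last case: a tight component that is a Gallai tree, where a coloring may genuinely fail to exist and the decision must track, through the cut vertices, exactly which color choices propagate. Getting the block-cut dynamic program correct (and extracting the constructive Erd\H{o}s--Rubin--Taylor coloring in the non-Gallai-tree subcase) is where the real work lies, whereas the reduction and the greedy Case~A are routine.
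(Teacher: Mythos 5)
The paper does not actually prove this theorem---it imports it from the cited references [CC06, DDJP15]---and your argument is essentially the proof found in that literature: reduce to list coloring where every vertex $v$ of $H = G - X$ satisfies $|L(v)| \ge \deg_H(v)$ (your slack inequality using $\Delta(G) \le k$ is correct), color greedily via a spanning tree when some vertex has strict slack, and for tight components invoke the Borodin/Erd\H{o}s--Rubin--Taylor characterization (degree-choosable iff not a Gallai tree) together with a dynamic program over the block-cut tree. The two points you leave schematic are indeed routine: maintaining one feasible-color set per cut vertex suffices because the subtrees hanging off a cut vertex interact only through that vertex's color, clique blocks reduce to a bipartite-matching (Hall/SDR) test, and an odd-cycle block needs no bespoke parity criterion---a straightforward dynamic program along the cycle over the choice of the first vertex's color decides its list-colorability in polynomial time.
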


\begin{theorem}\label{thm:xp:Delta}
	There is an algorithm that decides whether a graph $G$ has a $b$-coloring with $\Delta(G)$ colors in time 
	$n^{k + \cO(1)} \cdot 2^{\cO(k^2\log k)}$.
\end{theorem}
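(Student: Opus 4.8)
The plan is to reuse the enumeration of minimal $b$-precolorings from Lemma~\ref{lem:enum:precol} and combine it with the polynomial-time algorithm for \textsc{Precoloring Extension} guaranteed by Theorem~\ref{thm:precoloring:extension}, exactly as sketched before the statement. Concretely, I would enumerate every minimal $b$-precoloring $\coloring_X \colon X \to [k]$ of $G$ with $k = \Delta(G)$ colors, and for each of them invoke the \textsc{PrExt} algorithm on the instance $(G, k, \coloring_X)$. The algorithm returns \yes{} (together with a witnessing coloring) as soon as one of these instances is a \yes{}-instance, and reports \no{} if all of them fail.

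For correctness I would argue both directions. If some minimal $b$-precoloring $\coloring_X$ extends to a proper coloring $\coloring$ of $G$, then $\coloring$ is in fact a $b$-coloring: each $b$-vertex $x_i$ of $\coloring_X$ already has, within $X$, neighbors realizing all colors of $[k] \setminus \{i\}$, and since the extension keeps the colors on $X$ fixed and is proper, $x_i$ remains a $b$-vertex for color $i$ in $\coloring$. Conversely, if $G$ has a $b$-coloring with $k$ colors, then by Observation~\ref{obs:bcol:bprecol} its restriction to a suitable vertex set is a minimal $b$-precoloring $\coloring_X$; this precoloring is enumerated by the algorithm, and since the $b$-coloring itself extends $\coloring_X$, the corresponding \textsc{PrExt} instance is a \yes{}-instance. (Note that every enumerated $b$-precoloring is proper by definition, so \textsc{PrExt} never rejects spuriously.)

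The single nontrivial ingredient --- and the reason this approach is needed here rather than the naive greedy argument used in the $k = m(G)$ case --- is that a graph with $\Delta(G) = k$ may contain vertices of degree exactly $k$, for which a greedy extension can get stuck because all $k$ colors may already appear in their neighborhood. Here I would rely crucially on Theorem~\ref{thm:precoloring:extension}: since $\Delta(G) = k \le k$, the precoloring-extension instances are solvable in polynomial time, which is precisely the boundary case where this remains tractable. I expect the invocation of this cited result at the threshold $\Delta(G) = k$ to be the only genuinely delicate point; everything else is a direct combination of the two preceding results.

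For the running time I would bound the number of iterations by the number $\beta(k)$ of minimal $b$-precolorings from Lemma~\ref{lem:enum:precol}, each handled in polynomial time by the \textsc{PrExt} subroutine. Substituting $\Delta = k$ into $\beta(k) = n^k \cdot \Delta^{k(k-1)} \cdot (k-1)!^k$ gives $\beta(k) = n^k \cdot k^{k(k-1)} \cdot (k-1)!^k$, and since both $k^{k(k-1)}$ and $(k-1)!^k$ are of the form $2^{\cO(k^2\log k)}$, this is $n^k \cdot 2^{\cO(k^2\log k)}$. Multiplying by the enumeration overhead $k^{\cO(1)}$ and the polynomial cost $n^{\cO(1)}$ of each \textsc{PrExt} call then yields the claimed bound $n^{k + \cO(1)} \cdot 2^{\cO(k^2\log k)}$.
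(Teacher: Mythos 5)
Your proposal is correct and takes essentially the same approach as the paper's proof: enumerate all minimal $b$-precolorings via Lemma~\ref{lem:enum:precol}, run the \textsc{PrExt} algorithm of Theorem~\ref{thm:precoloring:extension} on each (valid precisely because $\Delta(G) = k$), and argue correctness in both directions using Observation~\ref{obs:bcol:bprecol}, with the same substitution $\Delta = k$ into $\beta(k)$ yielding the bound $n^{k+\cO(1)} \cdot 2^{\cO(k^2\log k)}$.
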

\begin{proof}
	The algorithm simply enumerates all minimal $b$-precolorings and then applies the algorithm for \textsc{PrExt} of Theorem~\ref{thm:precoloring:extension}. This algorithm can be applied with any precoloring of $G$ since $k = \Delta(G)$. We give the details in Algorithm~\ref{alg:xp:Delta:precol}.
	\begin{algorithm}
		\SetKwInOut{Input}{Input}\SetKwInOut{Output}{Output}
		\Input{A graph $G$}
		\Output{A $b$-coloring of $G$ with $k = \Delta(G)$ colors if it exists, and \no{} otherwise.}
		\ForEach{minimal $b$-precoloring $\coloring_X$ of $G$\label{alg:xp:Delta:precol:enum}}{
			Apply the algorithm for \textsc{PrExt} of Theorem~\ref{thm:precoloring:extension} with input $(G, k, \coloring_X)$\;
			\lIf{the algorithm found a proper coloring $\coloring$ extending $\coloring_X$}{\Return $\coloring$}
		}
		\Return \no{}\;
		\caption{Algorithm for \bcol{} with $k = \Delta(G)$.}
		\label{alg:xp:Delta:precol}
	\end{algorithm}
	
	We now show that the algorithm is correct.
	\begin{nestedclaim}\label{claim:alg:xp:Delta:correct}
		A graph $G$ contains a $b$-coloring with $k = \Delta(G)$ colors if and only if Algorithm~\ref{alg:xp:Delta:precol} returns a coloring $\coloring$.
	\end{nestedclaim}
	\begin{claimproof}
		Suppose Algorithm~\ref{alg:xp:Delta:precol} returns a coloring $\coloring$. Since $\coloring$ extends a $b$-precoloring $\coloring_X$ with $k$ colors, we can conclude that $\coloring$ has a $b$-vertex for each color. By the correctness of the algorithm of Theorem~\ref{thm:precoloring:extension}, we can conclude that $\coloring$ is a $b$-coloring with $k$ colors.
		
		Suppose $G$ contains a $b$-coloring with $k$ colors, say $\coloring$. By Observation~\ref{obs:bcol:bprecol}, $\coloring$ contains a minimal $b$-precoloring, say $\coloring_X$. Hence, Algorithm~\ref{alg:xp:Delta:precol:enum}, guessed $\coloring_X$ in some iteration. Furthermore, since $\coloring$ is a proper coloring that extends $\coloring_X$, $(G, k, \coloring_X)$ is a \yes{}-instance of \textsc{PrExt}, so the algorithm of Theorem~\ref{thm:precoloring:extension} returned a $b$-coloring $\coloring'$ that extends $\coloring_X$.
	\end{claimproof}
	It remains to argue the runtime. By Lemma~\ref{lem:enum:precol}, we can enumerate all $b$-precolorings in $\beta(k)\cdot k^{\cO(1)}$ time and  by Theorem~\ref{thm:precoloring:extension}, the algorithm for \textsc{PrExt} runs in time $n^{\cO(1)}$. The total runtime is hence (with $\Delta(G) = k$)
	\begin{align*}
		\beta(k)\cdot k^{\cO(1)} \cdot n^{\cO(1)} &= n^k \cdot \Delta^{k(k-1)} \cdot (k-1)!^k \cdot n^{\cO(1)} = n^{k + \cO(1)} \cdot k^{k(k-1)} \cdot (k-1)!^k \\ 
		&\le n^{k + \cO(1)} \cdot 2^{\cO(k^2\log k)},
	\end{align*}
	as claimed.
\end{proof}

\subsection{Algorithm for $k = m(G) - 1$}\label{sec:alg:dich:mG-1}
Before we proceed to describe the algorithm for \bcol{} when $k = m(G) - 1$, we show that the algorithm of Theorem~\ref{thm:precoloring:extension} can be used for a slightly more general case of \textsc{Precoloring Extension}, namely the case when all high-degree vertices in the input instance are precolored.
\begin{lemma}\label{lem:precoloring:extension}
	There is an algorithm that solves an instance $(G, k, \coloring_X)$ of \textsc{Precoloring Extension} in polynomial time whenever $\max_{v \in V(G) \setminus X} \deg(v) \le k$.
\end{lemma}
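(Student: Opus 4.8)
The plan is to reduce the stated generalization to the case already handled by Theorem~\ref{thm:precoloring:extension}, where the \emph{entire} graph has maximum degree at most $k$. The only obstruction to applying that theorem directly is that the precolored set $X$ may contain vertices of large degree; on the other hand, such vertices are already fixed, so their sole effect on the instance is to forbid certain colors on their uncolored neighbors. The idea is to replace each high-degree precolored vertex by local, low-degree gadgets that impose exactly these forbidden-color constraints.

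Concretely, I would first check that $\coloring_X$ is a proper coloring of $G[X]$; if not, no extension exists and the algorithm rejects. Otherwise I build an equivalent instance $(G', k, \coloring'_{X'})$ as follows. Delete all of $X$, keeping the induced subgraph on the uncolored vertices $V(G) \setminus X$. Then, for every $v \in V(G) \setminus X$ and every color $c \in \coloring_X(N_G(v) \cap X)$ that appears on a precolored neighbor of $v$, introduce a fresh pendant vertex $p_{v,c}$, make it adjacent only to $v$, and precolor it with $c$. The new precolored set $X'$ consists precisely of these pendant vertices, which form an independent set, so $\coloring'_{X'}$ is trivially proper. The construction is polynomial, since the number of pendants is bounded by the number of edges of $G$ between $V(G) \setminus X$ and $X$.

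The two claims to verify are the degree bound and the equivalence of the instances. For the degree bound, for any uncolored vertex I would write
\begin{align*}
	\deg_{G'}(v) = \card{N_G(v) \setminus X} + \card{\coloring_X(N_G(v) \cap X)} \le \card{N_G(v) \setminus X} + \card{N_G(v) \cap X} = \deg_G(v) \le k,
\end{align*}
using that a set of colors is no larger than the set of vertices carrying them; each pendant has degree $1 \le k$, so $\Delta(G') \le k$ and Theorem~\ref{thm:precoloring:extension} applies. For the equivalence, a proper extension $\coloring$ of $\coloring_X$ in $G$ restricts to a proper extension of $\coloring'_{X'}$ in $G'$ (each pendant edge $p_{v,c}v$ is satisfied because $\coloring(v)\neq c$, as $c$ sits on a genuine neighbor of $v$ in $G$), and conversely a proper extension $\coloring'$ of $\coloring'_{X'}$ in $G'$ combines with $\coloring_X$ on $X$ to give a proper extension in $G$: edges inside $X$ are handled by the initial properness check, edges inside $V(G)\setminus X$ are inherited, and each crossing edge $vw$ with $w \in X$ of color $c$ is guarded by the pendant $p_{v,c}$, forcing $\coloring'(v) \neq c$. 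The algorithm then simply returns the answer of Theorem~\ref{thm:precoloring:extension} on $(G', k, \coloring'_{X'})$, and the total running time is polynomial.

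The step requiring the most care is ensuring correctness of the gadget reduction rather than any hard computation: one must remember to test properness of $\coloring_X$ on $G[X]$ separately (the internal edges of $X$ are discarded in $G'$ and are not otherwise recovered), and one must confirm that encoding forbidden colors as distinct pendants does not inflate degrees beyond $k$, which is exactly the content of the displayed inequality above.
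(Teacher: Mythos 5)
Your proposal is correct and follows essentially the same route as the paper: after checking properness of $\coloring_X$ on $G[X]$, delete the precolored vertices and simulate their effect by precolored pendant vertices of degree one, then invoke Theorem~\ref{thm:precoloring:extension} on the resulting graph of maximum degree at most $k$. The only (immaterial) difference is that you create one pendant $p_{v,c}$ per forbidden \emph{color} at each uncolored vertex $v$, whereas the paper creates one pendant $x_y$ per precolored \emph{neighbor}, so your degree bound $\deg_{G'}(v) \le \deg_G(v)$ is an inequality where the paper's is an equality; both satisfy $\Delta(G') \le k$ and the equivalence argument is the same.
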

\begin{proof}
	First, we check whether $\coloring_X$ is a proper coloring of $G[X]$ and if not, the answer is \no{}. We create a new instance of \textsc{Precoloring Extension} $(G', k, \coloringg_{X'})$ as follows. For every vertex $x \in X$ and every vertex $y \in N_G(x) \setminus X$, we let $x_y$ be a new vertex that is only adjacent to $y$. We denote the set of these newly introduced vertices by $X' \defeq \{x_y \mid x \in X, y \in N_G(x) \setminus X\}$.
	We obtain $G'$ from $G$ as follows. Let $G'' = G - X$. Then, the vertex set of $G'$ is $V(G') \defeq V(G'') \cup X'$ and its edge set is $E(G') \defeq E(G'') \cup \{x_y y \mid x_y \in X'\}$.
	Now, we define a precoloring $\coloringg_{X'} \colon X' \to [k]$ such that for $x_y \in X'$, $\coloringg_{X'}(x_y) \defeq \coloring_X(x)$.
	
	It is clear that $(G, k, \coloring_X)$ is a \yes{}-instance of \textsc{Precoloring Extension} if and only if $(G', k, \coloringg_{X'})$ is a \yes{}-instance of \textsc{Precoloring Extension}.
	Furthermore, for every vertex $z \in X'$, $\deg_{G'}(z) = 1$ and for every vertex $v \in V(G') \setminus X'$, $\deg_{G'}(v) = \deg_G(v) \le k$, so $\Delta(G') \le k$. This means that we can solve the instance $(G', k, \coloring_{X'})$ in polynomial time using Theorem~\ref{thm:precoloring:extension}.
\end{proof}

\begin{theorem}\label{thm:xp:mG-1}
	There is an algorithm that decides whether a graph $G$ has a $b$-coloring with $k = m(G) - 1$ colors in time $n^{k^2 + \cO(1)} \cdot 2^{k^2 \log k}$.
\end{theorem}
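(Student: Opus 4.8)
The plan is to combine the techniques from the proofs of Theorems~\ref{thm:xp:mG} and~\ref{thm:xp:Delta}: enumerate all minimal $b$-precolorings, additionally guess the colors of the few genuinely high-degree vertices, and then extend via a precoloring-extension subroutine. First I would let $D \subseteq V(G)$ be the set of vertices of degree at least $k+1$. The crucial structural observation is that since $k = m(G) - 1$, i.e.\ $m(G) = k+1$, the definition of the $m$-degree forbids $k+2$ vertices of degree at least $(k+2)-1 = k+1$; hence $\card{D} \le k+1$. Thus only boundedly many vertices have degree exceeding $k$, which is exactly the threshold handled by Lemma~\ref{lem:precoloring:extension}.

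The algorithm then proceeds as follows. Using Lemma~\ref{lem:enum:precol}, enumerate all minimal $b$-precolorings $\coloring_X$ with $k$ colors. For each such $\coloring_X$, enumerate all of the at most $k^{\card{D \setminus X}} \le k^{k+1}$ colorings $\coloringg_{D \setminus X}\colon (D \setminus X) \to [k]$ of the high-degree vertices not already colored by $\coloring_X$, and form the combined precoloring $\coloring_{X \cup D} \defeq \coloring_X \cup \coloringg_{D \setminus X}$ (the domains are disjoint, so this is well-defined). Every uncolored vertex $v \in V(G) \setminus (X \cup D)$ satisfies $\deg(v) \le k$ by the choice of $D$, so the instance $(G, k, \coloring_{X \cup D})$ of \textsc{Precoloring Extension} falls within the scope of Lemma~\ref{lem:precoloring:extension} and can be solved in polynomial time (the lemma also first checks properness of the precoloring). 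If some iteration yields a proper coloring $\coloring$, output it; if none does, report \no{}.

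For correctness I would argue both directions, mirroring Claim~\ref{claim:alg:xp:Delta:correct}. Any coloring the algorithm outputs extends a minimal $b$-precoloring, so it has a $b$-vertex for every color and is proper, hence a $b$-coloring with $k$ colors. Conversely, given a $b$-coloring $\coloring$ of $G$ with $k$ colors, Observation~\ref{obs:bcol:bprecol} yields a set $W$ on which $\coloring|_W$ is a minimal $b$-precoloring; in the iteration where the enumeration produces $\coloring|_W$ and the guessed colors on $D \setminus W$ agree with $\coloring$, the combined precoloring $\coloring|_{W \cup D}$ is proper as a restriction of $\coloring$ and is extended by $\coloring$ itself, so Lemma~\ref{lem:precoloring:extension} returns some proper extension, which still agrees with $\coloring$ on $W$ and is therefore a $b$-coloring. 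For the runtime, the enumeration costs $\beta(k) \cdot k^{\cO(1)}$, the guessing of the high-degree colors contributes a factor $k^{k+1}$, and each call to Lemma~\ref{lem:precoloring:extension} runs in $n^{\cO(1)}$ time; using $\Delta \le n$ to bound $\beta(k) \le n^{k^2}(k-1)!^k$, the product collapses to the claimed bound $n^{k^2+\cO(1)} \cdot 2^{k^2\log k}$.

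The main obstacle—and the reason this case is not as immediate as $k = m(G)$—is the extension step. After precoloring $D$, the remaining vertices may have degree exactly $k$ (for instance in a $k$-regular graph, where $D = \emptyset$ yet every vertex has degree $k$), so one cannot simply greedily extend as in Theorem~\ref{thm:xp:mG}, nor apply Theorem~\ref{thm:precoloring:extension} directly since $\Delta(G)$ may far exceed $k$. Identifying that the correct high-degree threshold is $k+1$—so that $\card{D} \le k+1$ while the uncolored part has maximum degree at most $k$—is precisely what makes the generalized extension result of Lemma~\ref{lem:precoloring:extension} applicable, and it is the linchpin of the argument.
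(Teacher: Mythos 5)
Your proposal is correct and follows essentially the same route as the paper's proof: the same set $D$ of vertices of degree at least $k+1$ with the bound $\card{D}\le k+1$ from $m(G)=k+1$, the same enumeration of minimal $b$-precolorings via Lemma~\ref{lem:enum:precol} combined with guessing the at most $k^{k+1}$ colorings of $D\setminus X$, the same call to the generalized extension subroutine of Lemma~\ref{lem:precoloring:extension}, and the same correctness argument (mirroring Claim~\ref{claim:alg:xp:Delta:correct} via Observation~\ref{obs:bcol:bprecol}) and runtime computation. Your closing remark correctly identifies the key point the paper also relies on, namely that the threshold $k+1$ makes $D$ small while leaving the uncolored part with maximum degree at most $k$.
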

\begin{proof}
	Let $D$ denote the set of vertices of degree at least $k+1$ in $G$. By the definition of $m(G)$, we have that $\card{D} \le k + 1$. 
	We first enumerate all minimal $b$-precolorings of $G$, for each such precoloring, we enumerate all precolorings of $D$. Then, given a $b$-precoloring $\coloring_X$ with $D \subseteq X$, we have that every vertex in $V(G) \setminus X$ has degree at most $k$, so we can apply the algorithm of Lemma~\ref{lem:precoloring:extension} to verify whether there is a proper coloring of $G$ that extends $\coloring_X$. If so, we output that extension. If no such precoloring can be found, then we conclude that we are dealing with a \no{}-instance. We give the details in Algorithm~\ref{alg:xp:mG-1:precol}.
	\begin{algorithm}[h]
		\SetKwInOut{Input}{Input}\SetKwInOut{Output}{Output}
		\Input{A graph $G$}
		\Output{A $b$-coloring of $G$ with $k = m(G) - 1$ colors if it exists, and \no{} otherwise.}
		\ForEach{minimal $b$-precoloring $\coloring_X$ of $G$\label{alg:xp:mG-1:precol:enum}}{
			\ForEach{precoloring $\coloring_{D \setminus X} \colon (D \setminus X) \to [k]$\label{alg:xp:mG-1:precol:enum:D}}{
				\If{$\coloring_{X \cup D} \defeq \coloring_X \cup \coloring_{D \setminus X}$ is proper}{
					Apply the algorithm for \textsc{PrExt} of Lemma~\ref{lem:precoloring:extension} with input $(G, k, \coloring_X)$\;
					\lIf{the algorithm found a proper coloring $\coloring$ extending $\coloring_{X \cup D}$}{\Return $\coloring$}
				}
			}
		}
		\Return \no{}\;			
		\caption{Algorithm for \bcol{} with $k = m(G) - 1$.}
		\label{alg:xp:mG-1:precol}
	\end{algorithm}
	
	We now prove the correctness of the algorithm.
	\begin{nestedclaim}
	$G$ has a $b$-coloring with $k = m(G) - 1$ colors if and only if Algorithm~\ref{alg:xp:mG-1:precol} returns a coloring $\coloring$.
	\end{nestedclaim}
	\begin{claimproof}
		Suppose Algorithm~\ref{alg:xp:mG-1:precol} returns a coloring $\coloring$. Then, $\coloring$ is obtained from a minimal $b$-precoloring $\coloring_X$ and a precoloring $\coloring_{D \setminus X}$, both with $k$ colors, such that $\coloring_{X \cup D} = \coloring_X \cup \coloring_{D \setminus X}$ is proper. Furthermore, since all vertices in $V(G) \setminus D$ have degree at most $k$, the application of the algorithm of Lemma~\ref{lem:precoloring:extension} returns a correct answer. Hence, $\coloring$ is a proper coloring and since it is obtained by extending a $b$-precoloring, it is a $b$-coloring with $k$ colors.
	
	The forward direction can be proved as in Claim~\ref{claim:alg:xp:Delta:correct} using Observation~\ref{obs:bcol:bprecol} which states that every $b$-coloring can be obtained by extending a minimal $b$-precoloring.
	\end{claimproof}
	It remains to argue the runtime. In line~\ref{alg:xp:mG-1:precol:enum}, we enumerate $\beta(k)$ (see (\ref{eq:beta})) minimal $b$-precolorings in time $\beta(k) \cdot k^{\cO(1)}$ using Lemma~\ref{lem:enum:precol}. For each such precoloring, we enumerate all precolorings of $D \setminus X$. Since $\card{D} \le k + 1$, there are at most $k^{k+1}$ such colorings. Finally, we run the algorithm for \textsc{PrExt} due to Lemma~\ref{lem:precoloring:extension} which takes time $n^{\cO(1)}$. The total runtime becomes
	\begin{align*}
		\beta(k) \cdot k^{\cO(1)} \cdot k^{k+1} \cdot n^{\cO(1)} = n^k \cdot \Delta^{k(k-1)} \cdot (k-1)!^k \cdot k^{k+1} \cdot n^{\cO(1)} \le n^{k^2 + \cO(1)} \cdot 2^{k^2 \log k},
	\end{align*}
	as claimed.
\end{proof}

\section{Maximum Degree Parameterizations}\label{sec:alg:max:deg}
In this section we consider parameterizations of \bcol{} that involve the maximum degree $\Delta(G)$ of the input graph $G$. In Section~\ref{sec:algorithms:mG} we show that we can solve \bcol{} when $k = m(G)$ in time \FPT{} parameterized by $\Delta(G)$ and in Section~\ref{sec:algorithms:ell+k} we show that \bcol{} is \FPT{} parameterized by $\Delta(G) + \ell_k(G)$. 

Both algorithms presented in this section make use of the following reduction rule, which has already been applied in~\cite{PPS17,Sam12} to obtain the \FPT{} algorithm for the problem of deciding whether a graph $G$ has a $b$-coloring with $k = \Delta(G) + 1$ colors, parameterized by $k$.
\begin{reductionrule}[\cite{PPS17,Sam12}]\label{rule:degree:k-2}
	Let $(G, k)$ be an instance of \bcol{}. If there is a vertex $v \in V(G)$ such that every vertex $w \in N[v]$ has degree at most $k-2$, then reduce $(G, k)$ to $(G - v, k)$.
\end{reductionrule}

\subsection{\FPT{} Algorithm for $k = m(G)$ parameterized by $\Delta(G)$}\label{sec:algorithms:mG}
Sampaio~\cite{Sam12} and Panolan et al.~\cite{PPS17} independently showed that parameterized by $\Delta(G)$, it can be decided in \FPT{} time whether a graph $G$ has a $b$-coloring with $\Delta(G) + 1$ colors. In this section we show that in the same parameterization, it can be decided in \FPT{} time whether a graph has a $b$-coloring with $m(G)$ colors.

\begin{theorem}\label{thm:mG:fpt}
	There is an algorithm that given a graph $G$ on $n$ vertices decides whether $G$ has a $b$-coloring with $k = m(G)$ colors in time $2^{\cO(k^4\cdot \Delta)} + n^{\cO(1)} < 2^{\cO(\Delta^5)} + n^{\cO(1)}$, where $\Delta \defeq \Delta(G)$.
\end{theorem}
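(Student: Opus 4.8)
The plan is to use Reduction Rule~\ref{rule:degree:k-2} together with a win/win (kernelization) argument. First I would fix $k \defeq m(G)$ and apply Reduction Rule~\ref{rule:degree:k-2} exhaustively, obtaining an equivalent instance $(G', k)$. Write $H \defeq \{v : \deg_{G'}(v) \ge k-1\}$ and $D \defeq \{v : \deg_{G'}(v) \ge k\} \subseteq H$. The rule deletes $v$ only when every vertex of $N[v]$ has degree at most $k-2$, so it never deletes a vertex of $H$ or a neighbor of one; hence throughout the process the set $H$ and the degrees of its vertices are unchanged. Consequently $\card{H} \ge k$ and $\card{D} \le k$, exactly as in $G$, which forces $m(G') = k$, and every surviving vertex lies in $N_{G'}[H]$, that is, $V(G') = N_{G'}[H]$. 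The one quantity still out of control is $\card{H}$---on a long path, for instance, the rule deletes nothing---and with it $\card{V(G')}$.

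The dichotomy that resolves this is as follows. I would greedily select vertices $v_1, v_2, \dots$ from $H$ that are pairwise at distance at least $4$ in $G'$ and at distance at least $4$ from $D$. Suppose first that this produces $k$ vertices $v_1, \dots, v_k$. Then $G'$ has a $b$-coloring with $k$ colors: color $v_i$ with~$i$ and color some $k-1$ neighbors of $v_i$ bijectively with $[k] \setminus \{i\}$, which is possible since $\deg_{G'}(v_i) \ge k-1$. Because the $v_i$ are pairwise far apart and far from $D$, the closed neighborhoods $N[v_i]$ are pairwise non-adjacent and each avoids $N[D]$, so this partial coloring is proper and each $v_i$ witnesses color $i$. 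Coloring $D$ with $\card{D} \le k$ pairwise distinct colors keeps the coloring proper, since no edge reaches the chosen neighborhoods, and since every remaining vertex has degree at most $k-1$ the coloring extends greedily to a proper coloring of $G'$; as it extends a $b$-precoloring, it is a $b$-coloring with $k$ colors.

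If instead the greedy procedure halts after selecting $t < k$ vertices, then every vertex of $H$ lies within distance $3$ of $D$ or of some $v_i$. As a ball of radius $3$ contains at most $1 + \Delta + \Delta^2 + \Delta^3 = \cO(\Delta^3)$ vertices and there are at most $\card{D} + t \le 2k$ such centers, we obtain $\card{H} = \cO(k\Delta^3)$ and hence $\card{V(G')} = \card{N_{G'}[H]} = \cO(k\Delta^4)$. Since $m(G') = k$, I would finish by running the \XP{}-algorithm of Theorem~\ref{thm:xp:mG} on $G'$: its running time $\card{V(G')}^{k^2} \cdot 2^{\cO(k^2 \log k)}$ becomes $2^{\cO(k^2 \log(k\Delta))} = 2^{\cO(k^2 \log \Delta)}$ once the kernel size is substituted (using $k \le \Delta + 1$). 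Together with the polynomial cost of the reduction and the greedy selection, the total running time is $n^{\cO(1)} + 2^{\cO(k^2 \log \Delta)}$, which is within the claimed $2^{\cO(k^4 \Delta)} + n^{\cO(1)}$ bound.

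The step I expect to be most delicate is the treatment of the genuinely high-degree set $D$. A greedy extension of a precoloring can fail only at vertices of degree at least $k$, so in the \yes{}-construction the chosen $b$-vertices and their witnessing neighbors must be fenced off from $N[D]$---this is precisely why the packing is required to avoid $D$ as well---while in the bounded case one must verify that $D$ passes through the reduction intact, so that $m(G') = k$ and Theorem~\ref{thm:xp:mG} indeed applies. A second point worth stressing is that Reduction Rule~\ref{rule:degree:k-2} is exactly what converts the clustered case into a genuine kernel: without deleting the vertices outside $N_{G'}[H]$, a bound on $\card{H}$ would not bound $\card{V(G')}$, since arbitrarily many low-degree vertices far from $H$ could otherwise survive.
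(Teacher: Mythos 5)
Your proposal is correct, and it has the same win/win skeleton as the paper's proof: exhaustively apply Reduction Rule~\ref{rule:degree:k-2}, greedily build a distance-$4$ packing among the high-degree vertices, and then either conclude \yes{} outright (many far-apart degree-$(k-1)$ vertices) or obtain a kernel of size $\mathrm{poly}(k,\Delta)$. Two of your choices genuinely diverge from the paper, and both are defensible. First, the paper packs inside $D \cup T$ \emph{without} requiring distance from $D$; in its \yes{}-case it therefore colors $D$ injectively so as to be proper on $G[B' \cup D]$ and then colors the remaining neighbors of each $x_i$ ``in an arbitrary bijective manner'' --- a step whose write-up does not explicitly address possible conflicts between those newly colored neighbors and already-colored vertices of $D$ adjacent to them. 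Your stronger requirement that the packed vertices be at distance at least $4$ from $D$ as well makes each $N[v_i]$ non-adjacent to $N[D]$, so the witnesses and $D$ can be colored independently and this delicate interaction disappears entirely; the cost is negligible, since your bounded case has at most $\card{D} + t \le 2k$ ball centers, matching the paper's bound $\card{B \cup D} < 2k$. Second, in the bounded case the paper counts vertices in layers $S_1, \ldots, S_4$ around the maximal packing to get $\cO(k^4 \cdot \Delta)$ vertices and then invokes the exact exponential algorithm of Panolan et al.~\cite{PPS17}, giving $2^{\cO(k^4 \cdot \Delta)}$; you instead bound the kernel by $\cO(k \cdot \Delta^4)$ via radius-$3$ balls together with $V(G') = N_{G'}[H]$, verify correctly that $H$, $D$ and their degrees survive the reduction so that $m(G') = k$, and run the paper's own Theorem~\ref{thm:xp:mG} on the kernel, obtaining $2^{\cO(k^2 \log (k\Delta))}$ --- which is in fact a better bound than the one claimed (and there is no circularity, since Theorem~\ref{thm:xp:mG} is proved independently earlier). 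In short: same strategy, a cleaner \yes{}-case construction that sidesteps an imprecision in the paper's Case~2, and a sharper runtime in the kernelized case.
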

\begin{proof}
	We apply Reduction Rule~\ref{rule:degree:k-2} exhaustively to $G$ and consider the following $3$-partition $(D, T, R)$ of $V(G)$, where $D$ contains the vertices of degree at least $k$, $T$ the vertices of degree precisely $k-1$ and $R$ the remaining vertices, i.e.\ $R \defeq V(G) \setminus (D \cup T)$.
	Since we applied Reduction Rule~\ref{rule:degree:k-2} exhaustively, we make
\begin{nestedobservation}\label{obs:rest:neighbor:densetight}
	Every vertex in $R$ has at least one neighbor in $D \cup T$.
\end{nestedobservation}
	We pick an inclusion-wise maximal set $B \subseteq D \cup T$ such that for each pair of distinct vertices $b_1, b_2 \in B$, we have that $\dist(b_1, b_2) \ge 4$.

\smallskip
\noindent\textbf{Case 1 ($\card{B \cap T} < k$).}\footnote{This case is almost identical to~\cite[Case II in the proof of Theorem 2]{PPS17}.} We show that for any vertex in $u \in V(G) \setminus B$, there is a vertex $v \in B$ such that $\dist(u, v) \le 4$. Suppose $u \in D \cup T$. Since we did not include $u$ in $B$, it immediately follows that there is some $v \in B$ such that $\dist(u, v) < 4$. Now suppose $u \in R$. By Observation~\ref{obs:rest:neighbor:densetight}, $u$ has a neighbor $w$ in $D \cup T$ and by the previous argument, there is a vertex $v \in B$ such that $\dist(w, v) < 4$. We conclude that $\dist(u, v) \le 4$. Using this observation, we now show that in this case, the number of vertices in $G$ is polynomial in $k$ and $\Delta$.
\begin{nestedclaim}\label{claim:case:1:size}
	If $\card{B \cap T} < k$, then $\card{V(G)} \le \cO(k^4 \cdot \Delta)$.
\end{nestedclaim}
\begin{claimproof}
Note that $(B\cup D, S_1,\ldots, S_4)$ constitutes a partition of $V(G)$, where $S_i$ is the set of vertices of $V(G)\setminus (B\cup D)$ that are at distance exactly $i$ from $B$. Since $\card{B \cap T} < k$ and $\card{D} \le k$, we have that $\card{B\cup D} < 2k$, and therefore $\card{S_1}<2k\cdot\Delta$. By the definition of $m(G)$, all the vertices in $S_1\cup\ldots\cup S_4$ have degree at most $k-1$. This implies that $\card{S_i}<(k-1)^{i-1}\cdot 2k\cdot\Delta$. We conclude that the number of vertices in $G$ is at most $2k+ 2k\cdot\Delta\cdot\sum^{4}_{i=1}(k-1)^{i-1}=\cO(k^4\cdot\Delta)$.
\end{claimproof} 

By Claim~\ref{claim:case:1:size}, we can solve the instance in Case 1 in time $2^{\cO(k^4 \cdot \Delta)}$ using the algorithm of Panolan et al.~\cite{PPS17}.

\smallskip
\noindent\textbf{Case 2 ($\card{B \cap T} \ge k$).} Let $B' \subseteq B \cap T$ with $\card{B'} = k$ and denote this set by $B' = \{x_1, x_2, \ldots, x_k\}$. We show that we can construct a $b$-coloring $\coloring \colon V(G) \to [k]$ of $G$ such that for $i \in [k]$, $x_i$ is the $b$-vertex of color $i$. For $i \in [k]$, we let $\coloring(x_i) \defeq i$.
Next, we color the vertices in $D$. Recall that $\card{D} \le k$, so we can color the vertices in $D$ injectively with colors from $[k]$, ensuring that this will not create a conflict on any edge in $G[D]$. Furthermore, consider $i, j \in [k]$ with $i \neq j$. Since $\dist(x_i, x_j) \ge 4$, we have that $N(x_i) \cap N(x_j) = \emptyset$. In particular, there is no vertex in $D$ that has two or more neighbors in $B'$. To summarize, we can conclude that we can let $\coloring$ color the vertices of $D$ in such a way that:
\begin{enumerate}[(i)]
	\item $\coloring$ is injective on $D$, and
	\item $\coloring$ is a proper coloring of $G[B' \cup D]$.
\end{enumerate}
These two items imply that for each $x_i$ ($i \in [k]$), its neighbors $N(x_i) \cap D$ receive distinct colors which are also different from $i$. Let $\ell \defeq \card{N(x_i) \cap D}$. It follows that we can let $\coloring$ color the remaning $(k-1) - \ell$ neighbors of $x_i$ in an arbitrary bijective manner with the $(k-1) - \ell$ colors that do not yet appear in the neighborhood of $x_i$.

After this process, $x_i$ is a $b$-vertex for color $i$. We proceed in this way for all $i \in [k]$.
Since for $i, j \in [k]$ with $i \neq j$ we have that $\dist(x_i, x_j) \ge 4$, it follows that there are no edges between $N[x_i]$ and $N[x_j]$ in $G$.
Hence, we did not introduce any coloring conflict in the previous step. Now, all vertices in $G$ that have not yet received a color by $\coloring$ have degree at most $k-1$, so we can extend $\coloring$ to a proper coloring of $G$ in a greedy fashion. 

We summarize the whole procedure in Algorithm~\ref{alg:mG}.
	\begin{algorithm}	[t]
		\SetKwInOut{Input}{Input}\SetKwInOut{Output}{Output}
		\Input{A graph $G$ with $k = m(G)$~\tcp*[h]{More generally, graph $G$ with $\ell_k(G) \le k$}}
		\Output{A $b$-coloring with $k$ colors of $G$ if it exists, and \no{} otherwise.}
		
		Apply Reduction Rule~\ref{rule:degree:k-2} exhaustively\;\label{alg:mG:start}
		Let $(D, T, R)$ be a partition of $V(G)$ such that for all $x \in D$, $\deg_G(x) \ge k$, for all $x \in T$, $\deg_G(x) = k - 1$, and $R = V(G) \setminus (D \cup T)$\label{alg:mG:partition}\;
		Let $B \subseteq D \cup T$ be a maximal set such that for distinct $b_1, b_2 \in B$, $\dist(b_1, b_2) \ge 4$\label{alg:mG:dist}\;
		\If(\tcp*[h]{Case 1}){$\card{B \cap T} < k$\label{alg:mG:case:1:check}}{
			Solve the instance in time $2^{\cO(k^4 \cdot \Delta)}$ using~\cite{PPS17}\;\label{alg:mG:case:1}
			\If{the algorithm of \cite{PPS17} returned a $b$-coloring $\coloring$}
			{
				\Return $\coloring$\;
			} \Else {
				\Return \no{}\;
			}
		}\label{alg:mG:case:1:end}
		\Else(\tcp*[h]{Case 2, i.e.\ $\card{B \cap T} \ge k$}){
			Pick a size-$k$ subset of $B \cap T$, say $B' \defeq \{x_1, \ldots, x_k\}$\;			
			Initialize a $k$-coloring $\coloring \colon V(G) \to [k]$\;
			For $i \in [k]$, let $\coloring(x_i) \defeq i$\;
			Let $\coloring$ color the vertices of $D$ injectively such that $\coloring$ remains proper on $G[B' \cup D]$\;\label{alg:mG:d:injective}
			For $i \in [k]$, let $\coloring$ color $N(x_i) \cap D$ such that $x_i$ is the $b$-vertex of color $i$\;
			Extend the coloring $\coloring$ greedily to the remainder of $G$\;
			\Return $\coloring$\;
		}
		\caption{An $\FPT$-algorithm that decides whether a graph $G$ has a $b$-coloring with $m(G)$ colors, parameterized by $\Delta(G)$.}
		\label{alg:mG}
	\end{algorithm}
	We now analyze its runtime. Clearly, exhaustively applying Reduction Rule~\ref{rule:degree:k-2} can be done in time $n^{\cO(1)}$. As mentioned above, Case 1 can be solved in time $2^{\cO(k^4 \cdot \Delta)}$. In Case 2, the coloring of $G[B' \cup D]$ can be found in time $\cO(k^2)$, and extending the coloring to the remainder of $G$ can be done in time $n^{\cO(1)}$. The claimed bound follows.
\end{proof}
	We observe that Algorithm~\ref{alg:mG} in fact solves a more general case of the \bcol{} problem, a fact which we will use later in the proof of Theorem~\ref{thm:k+ell:fpt}. By the definition of $m(G)$, $\ell_{m(G)} \le m(G)$, and this is the only property of $m(G)$ that the algorithm relies on: it bounds the size of $D$ by $\card{D} \le m(G)$. This is crucially used in lines~\ref{alg:mG:case:1} and~\ref{alg:mG:d:injective}. 
		Now, if we relax the condition of $k = m(G)$ to $\ell_k(G) \le k$, we observe that the assumption $\ell_k(G) \le k$ still guarantees that $\card{D} \le k$. Hence, in line~\ref{alg:mG:case:1}, the bound of $\cO(k^4 \cdot \Delta)$ on $V(G)$ remains the same and in line~\ref{alg:mG:d:injective} we can find a coloring that is injective on $D$ as well.
	\begin{remark}\label{rem:mG:ell:k}
		Algorithm~\ref{alg:mG} solves the problem of deciding whether $G$ admits a $b$-coloring with $k$ colors in time $2^{\cO(k^4 \cdot \Delta)} + n^{\cO(1)}$ whenever $\ell_k(G) \le k$.
	\end{remark}
	Furthermore, in the proof of Theorem~\ref{thm:mG:fpt}, we in fact provide a polynomial kernel for the problem: In Case 1, we have a kernelized instance on $\cO(k^4 \cdot \Delta)$ vertices (see Claim~\ref{claim:case:1:size}) and in Case 2, we always have a \yes{}-instance.
	\begin{corollary}\label{cor:mG:kernel}
		The problem of deciding whether a graph $G$ has a $b$-coloring with $k = m(G)$ colors admits a kernel on $\cO(k^4 \cdot \Delta) = \cO(\Delta^5)$ vertices.
	\end{corollary}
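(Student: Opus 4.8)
The plan is to extract the kernelization algorithm directly from the proof of Theorem~\ref{thm:mG:fpt}, observing that the case analysis there already produces an instance of the required size. Recall that a kernelization algorithm must run in polynomial time and output an equivalent instance of size bounded by a function of the parameter—here $k = m(G)$, which satisfies $k \le \Delta + 1$, so a bound in terms of $k$ is equivalently a bound in terms of $\Delta$.

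First I would apply Reduction Rule~\ref{rule:degree:k-2} exhaustively; this takes time $n^{\cO(1)}$ and preserves the answer. Then I would compute the partition $(D, T, R)$ and the maximal distance-$4$ set $B \subseteq D \cup T$ as in the proof, which are clearly polynomial-time steps. The key dichotomy is the distinction between Case~1 and Case~2 based on whether $\card{B \cap T} \ge k$. In Case~2, the proof of Theorem~\ref{thm:mG:fpt} constructs an explicit $b$-coloring with $k$ colors, so the instance is a \yes{}-instance; the kernelization algorithm can therefore simply decide the instance and output a trivial constant-size \yes{}-instance. In Case~1, Claim~\ref{claim:case:1:size} guarantees that $\card{V(G)} \le \cO(k^4 \cdot \Delta)$, so the (already reduced) graph $G$ itself is a kernel of the claimed size, and the algorithm outputs it unchanged.

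Combining the two cases yields the corollary: the reduced instance either gets decided outright (Case~2) or has at most $\cO(k^4 \cdot \Delta)$ vertices (Case~1). Since $k = m(G) \le \Delta + 1$, we have $\cO(k^4 \cdot \Delta) = \cO(\Delta^5)$, giving both stated bounds. The main point to verify is that all the steps preceding the case split run in polynomial time—which they do, since exhaustively applying the reduction rule, computing vertex degrees to form $(D,T,R)$, and greedily building a maximal distance-$4$ set $B$ are all polynomial-time operations—and that the two cases can be distinguished in polynomial time, which amounts to checking $\card{B \cap T}$ against $k$. I do not anticipate a genuine obstacle here, since the corollary is essentially a repackaging of the proof of Theorem~\ref{thm:mG:fpt} as a kernelization; the only care needed is to confirm that the output in Case~2 is a legitimate small equivalent instance rather than merely a decision, and that the reduction rule is known to be sound (which it is, by~\cite{PPS17,Sam12}).
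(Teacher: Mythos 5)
Your proof is correct and takes essentially the same route as the paper, which likewise observes that the proof of Theorem~\ref{thm:mG:fpt} is already a kernelization: Case~1 yields an equivalent instance on $\cO(k^4 \cdot \Delta)$ vertices via Claim~\ref{claim:case:1:size}, while Case~2 is always a \yes{}-instance that the algorithm may decide outright (as the paper's definition of kernelization explicitly permits). Your added checks---that the preprocessing steps run in polynomial time and that $k = m(G) \le \Delta + 1$ gives the $\cO(\Delta^5)$ bound---are exactly the details the paper leaves implicit.
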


\subsection{\FPT{} Algorithm Parameterized by $\Delta(G) + \ell_k(G)$}\label{sec:algorithms:ell+k}
The next parameterization of \bcol{} involving the maximum degree that we consider is by $\Delta(G) + \ell_k(G)$. We show that in this case, the problem is \FPT{}.
By Observation~\ref{obs:lkG} we know that \bcol{} is \NP{}-complete on graphs with $\ell_k(G)=0$, and by Theorem~\ref{thm:main}, it is \NP{}-complete even when $k = 3$ and $\Delta(G) = 4$. Hence, there is no \FPT{}- nor \XP{}-algorithm for a parameterization using only one of the two above mentioned parameters unless $\P = \NP$.
Note that the algorithm we provide in this section can be used to solve the case of $k = m(G)$ for which we gave a separate algorithm in Section~\ref{sec:algorithms:mG}, see Algorithm~\ref{alg:mG}. However, Algorithm~\ref{alg:mG} is much simpler than the algorithm presented in this section, and simply applying the following algorithm for the case $k = m(G)$ results in a runtime of $2^{\cO(k^{k + 2} \cdot \Delta)} + n^{\cO(1)}$ which is far worse than the runtime of $2^{\cO(k^4\cdot \Delta)} + n^{\cO(1)}$ of Theorem~\ref{thm:mG:fpt}.

\newcommand\eff{\cO(\ell\cdot\Delta\cdot\min\{\ell,\Delta\}^{\ell+1})}

\begin{theorem}\label{thm:k+ell:fpt}
	There is an algorithm that given a graph $G$ on $n$ vertices decides whether $G$ has a $b$-coloring with $k$ colors in time 
	$2^{\eff} + n^{\cO(1)}$, where $\Delta \defeq \Delta(G)$ and $\ell \defeq \ell_k(G)$.
\end{theorem}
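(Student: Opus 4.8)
The plan is to push the strategy behind Algorithm~\ref{alg:mG} (the $k = m(G)$ case) into the regime where the set of high-degree vertices may be larger than $k$. Since $k > \Delta + 1$ makes the instance a trivial \no, I assume $k \le \Delta + 1$, so every quantity below is bounded in $\Delta + \ell$. First I would apply Reduction Rule~\ref{rule:degree:k-2} exhaustively and take the partition $(D, T, R)$ of $V(G)$ into the vertices of degree $\ge k$, of degree exactly $k-1$, and of degree $\le k-2$, respectively; then $\card{D} = \ell_k(G) = \ell$, every $b$-vertex must lie in $D \cup T$, and after the reduction every vertex of $R$ has a neighbour in $D \cup T$. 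The one place where the $m(G)$-argument breaks down is that $D$ can no longer be colored injectively with $k$ colors. I would therefore treat $N[D]$ (of size at most $\ell(\Delta + 1)$) as a bounded \emph{core} that must be properly $k$-colored, rather than rainbow-colored.

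Next I would run the same distance dichotomy, now anchored at $D$: pick an inclusion-maximal $B \subseteq T$ whose vertices are pairwise at distance $\ge 4$ and each at distance $\ge 4$ from $D$. In the \emph{dense} case $\card{B} \ge k$, I expect that $G$ admits a $b$-coloring with $k$ colors if and only if $G[N[D]]$ is properly $k$-colorable: from any proper $k$-coloring of the core one colors the closed neighbourhoods of $k$ chosen vertices $x_1, \dots, x_k \in B$ so that $x_i$ becomes the $b$-vertex of color $i$ (each $x_i \in T$ has exactly $k-1$ neighbours), and because the $x_i$ are pairwise at distance $\ge 4$ and satisfy $\dist(x_i, D) \ge 4$, the sets $N[x_i]$ are mutually non-adjacent and non-adjacent to $N[D]$, so the partial colorings do not clash; every still-uncolored vertex lies outside $D$, hence has degree $\le k-1$, and a greedy completion exists, while the converse is immediate. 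One tests $k$-colorability of the core by brute force over its $k^{\card{N[D]}}$ colorings. In the \emph{sparse} case $\card{B} < k$, maximality forces every vertex of $D \cup T$ to within distance $3$ of $B \cup D$ and, via the reduction, every vertex of $R$ to within distance $4$; as all vertices outside $D$ have degree $\le k-1$, a shell-by-shell count bounds $\card{V(G)}$ by a function of $\ell$ and $\Delta$, and I would solve this bounded graph directly with the algorithm of Panolan et al.~\cite{PPS17}.

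The step I expect to be the main obstacle is exactly the treatment of $D$, and it is what dictates the final running time. Unlike the $m(G)$ setting, a color may be forced to be witnessed by a $b$-vertex sitting \emph{inside} or adjacent to the core, and such a $b$-vertex cannot be relocated to a far-away $T$-vertex; the delicate points to verify are that the greedy completion never destroys a designated $b$-vertex and that the core coloring is simultaneously extendable and consistent with the witnessing conditions. The stated bound $2^{\eff} + n^{\cO(1)}$ corresponds to the conservative route in which, rather than reducing the dense case to mere $k$-colorability, one keeps for the bounded instance enough candidate neighbours to realize every admissible way of distributing the colors witnessed around $D$ among its $\le \ell$ vertices: charging at most $\min\{\ell, \Delta\}$ relevant choices to each of the $\ell + 1$ resulting slots accounts for the $\min\{\ell, \Delta\}^{\ell + 1}$ configurations, and retaining $\cO(\ell \cdot \Delta)$ core vertices per configuration inflates the instance to $\eff$ vertices before \cite{PPS17} is invoked. (The cleaner reduction to $k$-colorability that I sketched above should in fact yield a kernel polynomial in $\ell$ and $\Delta$, so verifying whether the configuration count is genuinely needed is the first thing I would revisit; the remaining correctness and timing arguments run parallel to those of Theorem~\ref{thm:mG:fpt}.)
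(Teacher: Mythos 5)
Your algorithm is correct, but it takes a genuinely different route through the hard case than the paper does. The paper picks $B \subseteq D \cup T$ with pairwise separation $\ell + 2$, precisely because its chosen witnesses $x_i \in B \cap T$ may be adjacent to, or close to, $D$: it first computes a proper $k$-coloring of $G[D \cup B']$ (by inclusion--exclusion) and then \emph{repairs} it via a recoloring lemma (Claims~\ref{claim:at:most:one:b:vertex} and~\ref{claim:nontrivial:recoloring}), which needs each component of $G[D \cup B']$ to contain at most one $x_i$ --- this is what forces the separation $\ell+2$, since a path through $D$ between two witnesses can have length up to $\ell+1$. You instead take $B \subseteq T$ with separation $4$ from each other \emph{and from $D$}, which makes the dense case collapse to plain $k$-colorability of the core $G[N[D]]$; your iff is sound: the stars $N[x_i]$ are pairwise non-adjacent and non-adjacent to $N[D]$, a bijective coloring of $N(x_i)$ makes each $x_i$ (degree exactly $k-1$) a $b$-vertex, greedy completion cannot destroy a $b$-vertex whose closed neighborhood is already fully colored, and if $G[N[D]]$ is not $k$-colorable then $G$ has no proper $k$-coloring at all (state this negative branch explicitly). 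Maximality of your $B$ still covers $V(G)$ within radius $4$ of $B \cup D$ in the sparse case, so your kernel has $\cO\bigl((k+\ell)\cdot\Delta\cdot(k-1)^3\bigr) = \mathrm{poly}(\ell,\Delta)$ vertices, versus the paper's radius-$(\ell+2)$ kernel whose shell growth $(k-1)^{\ell+1} \le \min\{\ell,\Delta\}^{\ell+1}$ is exactly where the factor in $\eff$ comes from. So your route buys a polynomial kernel (improving Corollary~\ref{cor:ell:kernel}) and a runtime of $2^{\mathrm{poly}(\ell,\Delta)}$, exponentially better in $\ell$ than the paper's bound whenever $k < \ell$, at the cost of being nominally worse than $2^{\eff}$ when $\ell$ is tiny relative to $\Delta$ --- but the paper's own proof has the same slack in its $\ell \le k$ branch, which it delegates to Remark~\ref{rem:mG:ell:k} at cost $2^{\cO(k^4\Delta)}$.

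One correction to your final paragraph: the hedging there is unfounded. The worry that ``a color may be forced to be witnessed by a $b$-vertex inside or adjacent to the core'' confuses verifying a given $b$-coloring with deciding existence; in the dense case you are free to construct a fresh $b$-coloring whose $k$ witnesses all lie in the far-away set $B$, and your iff already does exactly that. Consequently, the ``conservative route'' with $\min\{\ell,\Delta\}^{\ell+1}$ witness configurations that you invent to match the stated exponent is neither needed nor what the paper does --- as noted above, that factor is a shell-count artifact of the paper's distance-$(\ell+2)$ separation, not a configuration count. Your cleaner reduction to $k$-colorability of $G[N[D]]$ is complete as written and should be kept.
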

\begin{proof}
	The overall strategy of the algorithm is similar to Algorithm~\ref{alg:mG}. We can make the following assumptions. First, if $\ell \le k$, then we can apply Algorithm~\ref{alg:mG} directly to solve the instance at hand, see Remark~\ref{rem:mG:ell:k}. Hence we can assume that $k < \ell$. Furthermore, $k \le \Delta + 1$, otherwise we are dealing with a trivial \no{}-instance; we have that $k \le \min\{\ell - 1, \Delta + 1\}$. Furthermore, we can assume that $k > 2$, otherwise the problem is trivially solvable in time polynomial in $n$.
	
	Again, we consider a partition $(D, T, R)$ of $V(G)$, where the vertices in $D$ have degree at least $k$, the vertices in $T$ have degree $k-1$ and the vertices in $R$ have degree less than $k - 1$. We assume that Reduction Rule~\ref{rule:degree:k-2} has been applied exhaustively, so Observation~\ref{obs:rest:neighbor:densetight} holds, i.e.\ every vertex in $R$ has at least one neighbor in $D \cup T$.
	
	Now, we pick an inclusion-wise maximal set $B \subseteq D \cup T$ such that for each pair of distinct vertices $b_1, b_2 \in B$, $\dist(b_1, b_2) \ge \ell + 2$.

	\smallskip
	\noindent\textbf{Case 1 ($\card{B \cap T} < k$).} By the same argument given in Case 1 of the proof of Theorem~\ref{thm:mG:fpt}, we have that any vertex in $T \cup R$ is at distance at most $\ell + 2$ from a vertex in $B$. We now give a bound on the number of vertices in $G$ in terms of $\ell$ and $\Delta$.
	\begin{nestedclaim}
		If $\card{B \cap T} < k$, then $\card{V(G)}=\cO(\ell\cdot\Delta\cdot\min\{\ell,\Delta\}^{\ell+1})$.
	\end{nestedclaim}
	\begin{claimproof}
The proof strategy is the same as in the proof of Claim~\ref{claim:case:1:size}. Note that $(B\cup D, S_1,\ldots, S_{\ell+2})$ constitutes a partition of $V(G)$, where $S_i$ is the set of vertices of $V(G)\setminus (B\cup D)$ that are at distance exactly $i$ from $B$. Since $\card{B \cap T} < k$ and $\card{D} \le \ell$, we have that $\card{B\cup D} < k+\ell$ and $\card{S_1}<\ell\cdot\Delta+k(k-1)=\cO(\ell\cdot\Delta)$. By the definition of the set $D$, all the vertices in $S_1\cup\ldots\cup S_{\ell+2}$ have degree at most $k-1$. Thus, $\card{S_i}=(k-1)\cdot\card{S_{i-1}} = \card{S_1} \cdot (k-1)^{i-1}$ for all $i \in \{2,\ldots, \ell+2\}$. We conclude that the number of vertices in $G$ is at most 
	\begin{align*}
		k + \ell + \card{S_1} \cdot \sum_{i = 1}^{\ell + 2} (k-1)^{i-1} = k + \ell + \card{S_1} \cdot \cO((k-1)^{\ell+1}) = \cO(\ell \cdot \Delta \cdot (k-1)^{\ell + 1}),
	\end{align*}
where $(k-1)\leq\min\{\ell-2,\Delta\} \le \min\{\ell, \Delta\}$ and therefore $\card{V(G)}=\eff$.
\end{claimproof}

	By the previous claim, we can solve the instance in time $2^{\eff}$ in this case, using the algorithm~\cite{PPS17}.
	
	\newcommand\nid{N_{i\mid D}}
	\smallskip
	\noindent\textbf{Case 2 ($\card{B \cap T} \ge k$).} Let $B' \subseteq B \cap T$ be of size $k$ and denote it by $B' \defeq \{x_1, \ldots, x_k\}$. The strategy in this case is as follows: We compute a proper coloring of $G[D]$, and then modify it so that can be extended to a $b$-coloring of $G$. In this process we will be able to guarantee for each $i \in [k]$, that either $x_i$ can be the $b$-vertex for color $i$, or we will have found another vertex in $D$ that can serve as the $b$-vertex of color $i$. The difficulty here arises from the following situation: Suppose that in the coloring we computed for $G[D]$, a vertex $x_i$ has two neighbors in $D$ that received the same color. Then, $x_i$ cannot be the $b$-vertex of color $i$ in any extension of that coloring, since $\deg(x_i) = k - 1$, and $k-1$ colors need to appear the neighborhood of $x_i$ for it to be a $b$-vertex. However, recoloring a vertex in $N(x_i) \cap D$ might create a conflict in the coloring of $G[D]$. These potential conflicts can only appear in the connected component of $G[D \cup B']$ that contains~$x_i$. We now show that each component of $G[D \cup B']$ can contain at most one such vertex, by our choice of the set $B$.
	\begin{nestedclaim}\label{claim:at:most:one:b:vertex}
		Let $C$ be a connected component of $G[D \cup B']$. Then, $C$ contains at most one vertex from $B'$.
	\end{nestedclaim}
	\begin{claimproof}
		Let $Z \defeq V(C) \cap B'$ and assume for the sake of a contradiction that $\card{Z} > 1$. Let $x_i, x_j \in Z$ be a pair of distinct vertices in $Z$ such that $\dist_{G[D \cup B']}(x_i, x_j)$ is minimized among all pairs of distinct vertices in $Z$. Hence, all vertices on the path from $x_i$ to $x_j$ in $G[D \cup B']$ are from $D$. Since $\card{D} = \ell$, we have that $\dist_{G[D \cup B']}(x_i, x_j) \le \ell + 1$. However, we then have that $\dist_G(x_i, x_j) \le \dist_{G[D \cup B']}(x_i, x_j) \le \ell + 1$, a contradiction with the choice of $B$, by which we have that $\dist_G(x_i, x_j) \ge \ell + 2$. 
	\end{claimproof}
	Throughout the following, for $i \in [k]$, we denote by $C_i$ the connected component of $G[D \cup B']$ that contains $x_i$, and by $\ell_i$ the number of vertices of $C_i$, i.e.\ $\ell_i \defeq \card{V(C_i)}$.
	By Claim~\ref{claim:at:most:one:b:vertex}, 
	\begin{align*}
		C_i \neq C_j, \mbox{ for all } i, j \in [k], i \neq j.
	\end{align*} 
	Let furthermore $\cC_\emptyset$ be the set of connected components of $G[D \cup B']$ that do not contain any vertex from $B'$. We observe that any proper coloring of $G[D \cup B']$ can be obtained from independently coloring the vertices in $C_1, \ldots, C_k$, and $\cC_\emptyset$. If for some $i \in [k]$, $C_i$ is a trivial\footnote{We call a connected component of a graph \emph{trivial} if it contains only one vertex.} component, then $N(x_i) \cap D = \emptyset$. Hence, we can assign $x_i$ any color without creating any conflict with the remaining vertices in $G[D \cup B']$.
	We illustrate the structure of $G$ in Figure~\ref{fig:alg:k+ell}.
	\begin{figure}
		\centering
		\includegraphics[height=.235\textheight]{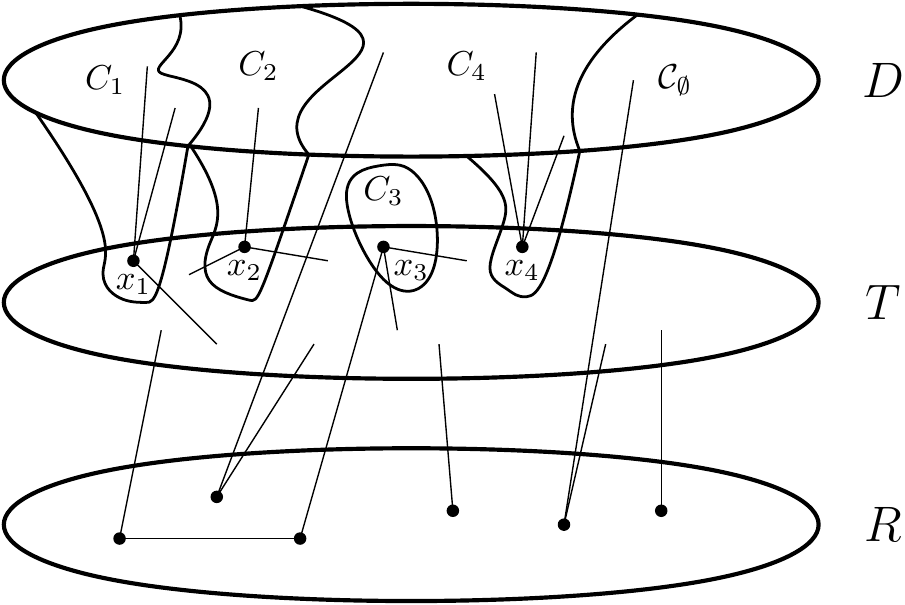}
		\caption{Illustration of the structure of a graph $G$ in the proof of Theorem~\ref{thm:k+ell:fpt} where $k = 4$. Here, $B' = \{x_1, \ldots, x_4\}$ and $C_1, \ldots, C_4$ are the components of $G[D \cup B']$ containing $x_1, \ldots, x_4$, respectively. Note that all vertices in $T$ are of degree $3$, all vertices in $R$ of degree at most $2$ and all vertices in $R$ have a neighbor in $D \cup T$.}
		\label{fig:alg:k+ell}
	\end{figure}
	Before we proceed with the proof of the next claim, we introduce some notation. For $X \subseteq V(G)$, a (pre-) coloring $\coloring \colon X \to [k]$, and $i, j \in [k]$, we denote by $\coloring_{i \leftrightarrow j}$ the (pre-) coloring obtained from $\coloring$ by \emph{switching} colors $i$ and $j$, i.e.\ for $v \in X$ we let:
	\begin{align*}
		\coloring_{i \leftrightarrow j}(v) \defeq \left\lbrace\begin{array}{ll} \coloring(v), &\mbox{ if } \coloring(v) \notin \{i, j\} \\
				i, &\mbox{ if } \coloring(v) = j \\
				j, &\mbox{ if } \coloring(v) = i
			\end{array}\right.
	\end{align*}
	It is immediate that $\coloring$ is proper if and only if $\coloring_{i \leftrightarrow j}$ is proper.
	\begin{nestedclaim}\label{claim:nontrivial:recoloring}
		Let $i \in [k]$ be such that $C_i$ is a nontrivial component of $G[D \cup B']$ and let $\coloring \colon V(C_i) \to [k]$ be a proper coloring of $C_i$. Then, one can obtain in time $\cO(k^2 \cdot \ell_i^2)$ a proper coloring $\coloringg \colon V(C_i) \to [k]$ of $C_i$ such that either
		\begin{enumerate}[(i)]
			\item there is a vertex in $V(C_i)$ different from $x_i$ that is a $b$-vertex for color $i$ in $\coloringg$, or\label{claim:nontrivial:recoloring:1}
			\item $\coloringg(x_i) = i$ and $\coloringg$ is injective on $N_{C_i}[x_i]$.\label{claim:nontrivial:recoloring:2} 
		\end{enumerate}
	\end{nestedclaim}
\begin{claimproof}
	We can assume that $\coloring(x_i) = i$, otherwise we let $\coloring \defeq \coloring_{i \leftrightarrow j}$ for some $[k] \ni j \neq i$.		
		If $\coloring$ is injective on $N_{C_i}[x_i]$, we let $\coloringg \defeq \coloring$ and we are in case (\ref{claim:nontrivial:recoloring:2}).
	
		Otherwise, we do as follows. Let $j \in [k]$ with $j \neq i$ be a color that does not appear on any vertex in $N_{C_i}(x_i)$, i.e.\ there is no vertex $y \in N_{C_i}(x_i)$ such that $\coloring(y) = j$. Such a color must exists by the fact that $\coloring$ is not injective on $N_{C_i}[x_i]$ and the fact that $\deg_G(x_i) = k-1$. For each vertex $z \in V(C_i)$ with $\coloring(z) = j$, we do the following.
		\begin{enumerate}[1)]
			\item If $\coloring(N[z]) = [k]$, i.e.\ if all colors except $j$ appear in the neighborhood of $z$, then $z$ is a $b$-vertex for color $j$. We let $\coloringg \defeq \coloring_{i \leftrightarrow j}$ and we are in case (\ref{claim:nontrivial:recoloring:1}).
			\item Otherwise, there is a color $j' \neq j$ that does not appear in the neighborhood of $z$. We update $\coloring$ by setting $\coloring(z) \defeq j'$, keeping the coloring $\coloring$ proper.
		\end{enumerate}
		If these two steps are executed for all vertices that $\coloring$ colored $j$ without ending up in case (\ref{claim:nontrivial:recoloring:1}), then $\coloring$ is a proper coloring of $C_i$ with colors $[k] \setminus \{j\}$. Now, let $y_1, y_2 \in N_{C_i}(x_i)$ be a pair of non-adjacent neighbors of $x_i$. We add the edge $y_1 y_2$ to $C_i$ and update $\coloring(y_1) \defeq j$. Now, $\coloring$ is a proper $k$-coloring of the graph obtained from $C_i$ by adding an edge in the neighborhood of $x_i$.
		
		We repeat this process until we either reached case (\ref{claim:nontrivial:recoloring:1}) at some stage, or we have that $\coloring$ is a proper $k$-coloring of the graph obtained from $C_i$ by making $N_{C_i}[x_i]$ a clique. The latter case implies that $\coloring$ is injective on $N_{C_i}[x_i]$ and we are in case (\ref{claim:nontrivial:recoloring:2}) by letting $\coloringg \defeq \coloring$.
		This recoloring procedure terminates within time $\cO(\binom{k-1}{2} \cdot \card{C_i}^2) = \cO(k^2\cdot \ell_i^2)$.
	\end{claimproof}
	The algorithm to solve this case now works as follows. First, we compute a proper $k$-coloring $\coloring$ of $G[D \cup B']$. We derive from $\coloring$ another $k$-coloring $\coloringg$ of $G[D \cup B']$. 
	For each $i \in [k]$, we do the following.
	If $C_i$ is a nontrivial component, then, with input $\coloring|_{V(C_i)}$
	we compute a proper $k$-coloring $\coloringg_i$ of $C_i$ using Claim~\ref{claim:nontrivial:recoloring} satisfying the stated conditions, and let $\coloringg|_{V(C_i)} \defeq \coloringg_i$.
	Finally, we let $\coloringg|_{V(\cC_\emptyset)} \defeq \coloring|_{V(\cC_\emptyset)}$.	
	
	We now show how to extend the coloring $\coloringg$ to a $b$-coloring of the entire graph $G$.	
	Let $i \in [k]$. By Claim~\ref{claim:nontrivial:recoloring} we know that in $\coloringg$ either there is a $b$-vertex for color $i$ in $C_i$ or $\coloringg$ is injective on $N_{C_i}[x_i]$. In the latter case, let $N_G(x_i) = \{y_1, \ldots, y_{k-1}\}$ and assume wlog.\ that for some $k' \le k-1$, $N_{C_i}(x_i) = \{y_1, \ldots, y_{k'}\}$. Then, $k'$ different colors appear on $\{y_1, \ldots, y_{k'}\}$ and we can let $\coloringg$ color $\{y_{k'+1}, \ldots, y_{k-1}\}$ bijectively with the remaining $k - 1 - k'$ colors to make $x_i$ the $b$-vertex of color $i$. (Note that since $\dist(x_i, x_j) \ge \ell + 2 \ge 4$ for $i \neq j$, this does not introduce any coloring conflict.) The remaining vertices of $G$ have degree at most $k-1$, so we can extend the coloring $\coloringg$ greedily to the remainder of $G$.
	
	It remains to argue the runtime of the algorithm. Applying Reduction Rule~\ref{rule:degree:k-2} exhaustively can be done in time $n^{\cO(1)}$. As  mentioned above, in Case 1 we can solve the instance in time $2^{\eff}$.
	In Case 2, we can compute a proper $k$-coloring of $G[D \cup B']$ in time $\cO^*(2^{\ell+k}) = 2^{\ell + k} \cdot \ell^{\cO(1)}$ using standard methods~\cite{BHK09}. Modifying this coloring to satisfy the condition of Claim~\ref{claim:nontrivial:recoloring} for each $i \in [k]$ takes time 
	$
		\cO(\sum_{i = 1}^k k^2 \cdot \ell_i^2) = \cO(k^3\cdot \sum_{i = 1}^k \ell_i^2) = \cO(k^3\cdot \ell^2) = \cO(\ell^5).
	$
	Extending the coloring to the remainder of $G$ can be done in time $n^{\cO(1)}$, so the total runtime of the algorithm is 
	\begin{align*}
		2^{\eff} + 2^{\ell + k} \cdot \ell^{\cO(1)} + \cO(\ell^5) + n^{\cO(1)} = 2^{\cO(\ell^{\ell + 2} \cdot \max\{\ell, \Delta\})} + n^{\cO(1)},
	\end{align*}	
	as claimed.
\end{proof}
	Similar to above, we have the following consequence.
	\begin{corollary}\label{cor:ell:kernel}
		The problem of deciding whether a graph $G$ admits a $b$-coloring with $k$ colors admits a kernel on $\eff$ vertices, where $\Delta \defeq \Delta(G)$ and $\ell \defeq \ell_k(G)$.
	\end{corollary}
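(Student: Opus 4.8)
The plan is to observe that the kernel is produced already by the polynomial-time preprocessing carried out at the start of the algorithm of Theorem~\ref{thm:k+ell:fpt}, in exactly the way Corollary~\ref{cor:mG:kernel} is extracted from the proof of Theorem~\ref{thm:mG:fpt}. The kernelization algorithm proceeds as follows. First, apply Reduction Rule~\ref{rule:degree:k-2} exhaustively; this runs in time $n^{\cO(1)}$, is answer-preserving, and can only decrease $\Delta(G)$ and $\ell_k(G)$, so the parameter $\Delta + \ell$ does not grow. If $\ell \le k$, we are in the situation of Remark~\ref{rem:mG:ell:k}, and the kernelization underlying Corollary~\ref{cor:mG:kernel} applies; using $k \le \Delta + 1$ its size is bounded by a function of $\Delta + \ell$. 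Hence we may assume $\ell > k$, so that $k \le \min\{\ell - 1, \Delta + 1\}$ exactly as in the proof of Theorem~\ref{thm:k+ell:fpt}.

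Next, compute the partition $(D, T, R)$ of $V(G)$ and a maximal set $B \subseteq D \cup T$ whose pairwise distances are at least $\ell + 2$; both can be done in polynomial time. We then split into the two cases of the theorem according to whether $\card{B \cap T} < k$ or $\card{B \cap T} \ge k$. In Case~1 ($\card{B \cap T} < k$), the claim established inside the proof of Theorem~\ref{thm:k+ell:fpt} already guarantees that $\card{V(G)} = \eff$, so the current reduced instance is itself the desired kernel and we output it unchanged. In Case~2 ($\card{B \cap T} \ge k$), the proof of Theorem~\ref{thm:k+ell:fpt} constructs a $b$-coloring of $G$ with $k$ colors, so we are dealing with a \yes{}-instance; we therefore output a fixed trivial \yes{}-instance of constant size. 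Both branches are answer-preserving and the output has at most $\eff$ vertices, which is the claimed bound.

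The point that needs care is that a kernelization algorithm must run in time polynomial in $n$ and therefore cannot invoke the exponential-time subroutine of~\cite{PPS17} used in Case~1 of Theorem~\ref{thm:k+ell:fpt}. This is precisely why we stop immediately after the preprocessing: deciding which case we are in only requires testing whether $\card{B \cap T} \ge k$, and in Case~2 the \emph{existence} of a $b$-coloring is guaranteed by the constructive argument of the theorem, so no coloring has to be computed during kernelization. The remaining delicate step is reconciling the regime $\ell \le k$, which Theorem~\ref{thm:k+ell:fpt} defers to Algorithm~\ref{alg:mG}, with the stated size bound; here one invokes the kernel of Corollary~\ref{cor:mG:kernel} and again uses $k \le \Delta + 1$ to express its size purely as a function of the parameter $\Delta + \ell$. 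Once these two points are settled, both correctness and the size bound follow directly from the statements already established in the proof of Theorem~\ref{thm:k+ell:fpt}.
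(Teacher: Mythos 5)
Your overall route is the same one the paper intends (Corollary~\ref{cor:ell:kernel} is extracted from the proof of Theorem~\ref{thm:k+ell:fpt} exactly as Corollary~\ref{cor:mG:kernel} is extracted from Theorem~\ref{thm:mG:fpt}), and your Case~1 handling and your observation that a kernelization cannot invoke the exponential subroutine of~\cite{PPS17} are both correct. But your Case~2 step is genuinely wrong: it is \emph{not} true that $\card{B \cap T} \ge k$ guarantees a \yes{}-instance in the setting of Theorem~\ref{thm:k+ell:fpt}. The constructive argument there begins by computing a proper $k$-coloring of $G[D \cup B']$, and in the regime $\ell > k$ such a coloring need not exist, since $\card{D} = \ell$ may exceed $k$. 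Concretely, for $k \ge 3$ let $G$ be the disjoint union of $K_{k+2}$ and $k$ copies of the star $K_{1,k-1}$. Reduction Rule~\ref{rule:degree:k-2} removes nothing, $D$ consists of the $\ell = k+2 > k$ clique vertices, $T$ of the star centers, and any maximal $B$ contains all $k$ star centers (they lie in distinct components), so $\card{B \cap T} \ge k$ and you are in Case~2; yet $G$ contains $K_{k+2}$ and therefore admits no proper $k$-coloring at all. Your kernelization would output a trivial \yes{}-instance for this \no{}-instance. The analogous step is sound for Corollary~\ref{cor:mG:kernel} precisely because there $\card{D} \le k$, so $D$ can be colored injectively and Case~2 is unconditionally \yes{}; that safety net disappears once $\ell > k$, which is exactly the regime Theorem~\ref{thm:k+ell:fpt} operates in.

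The repair is short and stays within your framework. The proof of Theorem~\ref{thm:k+ell:fpt} shows that in Case~2 the instance is a \yes{}-instance \emph{if and only if} $G[D \cup B']$ admits a proper $k$-coloring: the forward direction is the restriction of any $b$-coloring of $G$, and the backward direction is the recoloring of Claim~\ref{claim:nontrivial:recoloring} followed by the greedy extension. Since $\card{D \cup B'} \le \ell + k$, you can output in polynomial time the equivalent \bcol{} instance consisting of $G[D \cup B']$ together with $k$ disjoint stars $K_{1,k-1}$ (the star centers supply $b$-vertices for all $k$ colors, so this graph has a $b$-coloring with $k$ colors if and only if $G[D \cup B']$ is $k$-colorable). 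This instance has $\cO(\ell + k^2)$ vertices, comfortably within $\eff$, and no coloring has to be computed during kernelization. With that substitution for your Case~2, and keeping your treatment of the $\ell \le k$ regime via Remark~\ref{rem:mG:ell:k} and Corollary~\ref{cor:mG:kernel}, the argument goes through.
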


\section{Conclusion}\label{sec:conclusion}

We have presented a complexity dichotomy for \bcol{} with respect to two upper bounds on the $b$-chromatic number, in the following sense: We have shown that given a graph $G$ and for fixed $k \in \{\Delta(G) + 1 - p, m(G) - p\}$, it can be decided in polynomial time whether $G$ has a $b$-coloring with $k$ colors whenever $p \in \{0, 1\}$ and the problem remains \NP{}-complete whenever $p \ge 2$, already for $k = 3$.

The most immediate question left open in this work is the parameterized complexity of the \bcol{} problem when $k \in \{m(G), \Delta(G), m(G) - 1\}$. In all of these cases, we have provided \XP{}-algorithms, and it would be interesting to see whether these problems are \FPT{} or \W{}[1]-hard.
\begin{openproblem}\label{prob:fpt}
	Let $G$ be a graph and $k \in \{m(G), \Delta(G), m(G) - 1\}$. Is the problem of deciding whether a graph $G$ has a $b$-coloring with $k$ colors parameterized by $k$ fixed-parameter tractable or \W{}[1]-hard?
\end{openproblem}
We showed that \bcol{} is \FPT{} parameterized by $\Delta(G) + \ell_k(G)$, where $\ell_k(G)$ denotes the number of vertices of degree at least $k$ in $G$, and this is optimal in the sense that there is no \FPT{} nor \XP{} algorithm for the problem parameterized by only one of the two invariants.  It would be interesting to see if one could devise an \FPT{}-algorithm for the parameterization that replaces the maximum degree by the number of colors.

\begin{openproblem}
	Is \bcol{} parameterized by $k + \ell_k(G)$ fixed-parameter tractable?
\end{openproblem}

Note that a positive answer to this question would also imply an \FPT{}-algorithm for the question of whether a graph $G$ has a $b$-coloring with $k=m(G)$ colors parameterized by $k$, partially answering Open Problem~\ref{prob:fpt}.

Recently, Effantin et al.~\cite{EGT16} introduced the \emph{relaxed $b$-chromatic number} of a graph $G$, $\chi_b^r(G)$, as the maximum $b$-chomatic number of any induced subgraph of $G$, i.e.~$\chi_b^r(G) \defeq \max_{X \subseteq V(G)} \chi_b(G[X])$. It is clear that $\chi_b(G) \le \chi_b^r(G)$, so it would be interesting to see if for fixed $k$, the problem of deciding whether a graph $G$ admits a $b$-coloring with $k$ colors when the value of $k$ is close to $\chi_b^r(G)$ admits a similar dichotomy as the ones we presented for the upper bounds $\Delta(G) + 1$ and $m(G)$ on $\chi_b(G)$. 

\paragraph*{Acknowledgements.} We would like to thank Fedor Fomin for useful advice with good timing and Petr Golovach for pointing out the reference~\cite{DDJP15}.

\newcommand{\noop}[1]{}
\bibliographystyle{siam}
\bibliography{ref}

\begin{thebibliography}{10}

\bibitem{Marx17}
{\sc P.~Aboulker, N.~Brettell, F.~Havet, D.~Marx, and N.~Trotignon}, {\em
  Coloring graphs with constraints on connectivity}, Journal of Graph Theory,
  85 (2017), pp.~814--838.

\bibitem{BCF07}
{\sc D.~Barth, J.~Cohen, and T.~Faik}, {\em On the $b$-continuity property of
  graphs}, Discrete Applied Mathematics, 155 (2007), pp.~1761--1768.

\bibitem{BHK09}
{\sc A.~Bj{\"o}rklund, T.~Husfeldt, and M.~Koivisto}, {\em Set partitioning via
  inclusion-exclusion}, SIAM Journal on Computing, 39 (2009), pp.~546--563.

\bibitem{CC06}
{\sc M.~Chelb\'{i}k and J.~Chleb\'{i}kova}, {\em Hard coloring problems in low
  degree planar bipartite graphs}, Discrete Applied Mathematics, 154 (2006),
  pp.~1960--1965.

\bibitem{CyganEtAl15}
{\sc M.~Cygan, F.~V. Fomin, L.~Kowalik, D.~Lokshtanov, D.~Marx, M.~Pilipczuk,
  M.~Pilipczuk, and S.~Saurabh}, {\em Parameterized Algorithms}, Springer,
  1st~ed., 2015.

\bibitem{DDJP15}
{\sc K.~K. Dabrowski, F.~Dross, M.~Johnson, and D.~Paulusma}, {\em Filling the
  complexity gaps for colouring planar and bounded degree graphs}, in Proc.
  26th International Workshop on Combinatorial Algorithms (IWOCA '15),
  vol.~9538 of Lecture Notes in Computer Science, Springer, 2015, pp.~100--111.

\bibitem{DF13}
{\sc R.~G. Downey and M.~R. Fellows}, {\em Fundamentals of Parameterized
  Complexity}, Texts in Computer Science, Springer, 2013.

\bibitem{EGT16}
{\sc B.~Effantin, N.~Gastineau, and O.~Togni}, {\em A characterization of
  b-chromatic and partial grundy numbers by induced subgraphs}, Discrete
  Mathematics, 339 (2016), pp.~2157--2167.

\bibitem{GLPR19}
{\sc E.~Galby, P.~T. Lima, D.~Paulusma, and B.~Ries}, {\em On the parameterized
  complexity of $k$-edge colouring}, 2019.
\newblock arXiv:1901.01861.

\bibitem{GJS76}
{\sc M.~R. Garey, D.~S. Johnson, and L.~Stockmeyer}, {\em Some simplified
  {NP}-complete graph problems}, Theoretical Computer Science, 1 (1976),
  pp.~237--267.

\bibitem{HSS12}
{\sc F.~Havet, C.~L. Sales, and L.~Sampaio}, {\em $b$-{C}oloring of tight
  graphs}, Discrete Applied Mathematics, 160 (2012), pp.~2709--2715.

\bibitem{HS13}
{\sc F.~Havet and L.~Sampaio}, {\em On the grundy and b-chromatic numbers of a
  graph}, Algorithmica, 65 (2013), pp.~885--899.

\bibitem{IM99}
{\sc R.~W. Irving and D.~F. Manlove}, {\em The $b$-chromatic number of a
  graph}, Discrete Applied Mathematics, 91 (1999), pp.~127--141.

\bibitem{Kar72}
{\sc R.~M. Karp}, {\em Reducibility among combinatorial problems}, in
  Complexity of computer computations, Springer, 1972, pp.~85--103.

\bibitem{KTV02}
{\sc J.~Kratochv\'{i}l, Z.~Tuza, and M.~Voigt}, {\em On the $b$-chromatic
  number of graphs}, in Proc. 28th International Workshop on Graph-Theoretic
  Concepts in Computer Science (WG '02), vol.~2573 of Lecture Notes in Computer
  Science, 2002, pp.~310--320.

\bibitem{PPS17}
{\sc F.~Panolan, G.~Philip, and S.~Saurabh}, {\em On the parameterized
  complexity of $b$-chromatic number}, Journal of Computer and System Sciences,
  84 (2017), pp.~120--131.
\newblock Previously appeared at IPEC '15.

\bibitem{Sam12}
{\sc L.~Sampaio}, {\em Algorithmic Aspects of Graph Coloring Heuristics}, PhD
  thesis, Universit\'{e} Nice Sophia Antipolis, France, 2012.

\end{thebibliography}

\end{document}